\newtheorem{prop}{Proposition}[section]
\newtheorem{lemma}[prop]{Lemma}
\newtheorem{cor}[prop]{Corollary}
\newtheorem{df}[prop]{Definition}
\newtheorem{rem}[prop]{Remark}
\numberwithin{equation}{section}
\newcommand{\A}{{\mathcal A}}
\newcommand{\B}{{\mathcal B}}
\newcommand{\D}{{\mathcal D}}
\newcommand{\J}{{\mathcal J}}
\newcommand{\cL}{{\mathcal L}}
\newcommand{\M}{{\mathcal M}}
\newcommand{\HH}{{\mathcal H}}
\newcommand{\I}{{\mathbb I}}
\newcommand{\IM}{{\mathbb M}}
\newcommand{\N}{{\mathbb N}}
\newcommand{\R}{{\mathbb R}}
\newcommand{\C}{\mathbb C}
\newcommand{\cinf}{{C^\infty(\M)}}
\newcommand{\ds}{{\slash \!\!\!\partial}}
\newcommand{\ii}{{\rm i}}
\newcommand{\ncint}{{\int \!\!\!\!\!\! -}}
\DeclareMathOperator{\Aut}{Aut}
\DeclareMathOperator{\id}{id}
\DeclareMathOperator{\Tr}{Tr}
\begin{document}

\title{On twisting real spectral triples by algebra automorphisms} 
\author{Giovanni Landi, Pierre Martinetti} 
\date{}

\maketitle

\begin{abstract}
We systematically investigate ways to twist a real
spectral triple via an algebra automorphism and in particular, we naturally define a twisted partner for any real graded spectral triple. 
Among other things we investigate consequences of the twisting on the fluctuations of the metric and 
possible applications to the spectral approach to the Standard Model of particle physics.
\end{abstract}

\tableofcontents

\parskip=1ex

\vfill
\thanks{\hspace*{-\parindent}
\date{v1: 1st January  2016; v2: August 2016} \\[2pt]
-------- -------- -------- -------- -------- \\
2010 \textit{Mathematics Subject Classification}. Primary: 58B34; Secondary: 81775; 47L87. \\[2pt]
\textit{Key words and phrases.}
Noncommutative geometry; $\sigma$-spectral triples; Twisted real spectral triples; Twisted metric fluctuations; Standard model of elementary particles.
\\[5pt]
\textit{Thanks.}
This work was partially supported by the Italian Project ``Prin 2010-11 -- Operator Algebras, Noncommutative Geometry and Applications''. 
\\[10pt]
Giovanni Landi, Universit\`{a} di Trieste, Trieste, Italy and I.N.F.N. Sezione di Trieste, Trieste, Italy. 
\noindent
Pierre Martinetti, Universit\`{a} di Trieste, Trieste and Universit\`{a} di Genova, Genova, Italy. \\
emails: landi@units.it , martinetti@dima.unige.it
}
 
\newpage
\section{Introduction}

We investigate in a systematic way how to twist a
spectral triple, and in particular the consequences of the twisting on the fluctuations of the metric. 
Twisted spectral triples have been defined by Connes and Moscovici in
\cite{Connes:1938fk}. They consist in replacing in the definition of a
spectral triple $(\A, \HH, D)$ the condition that $[D,a]$ be bounded for any
$a\in\A$ by the following:
there exists an automorphism $\rho$ of $\A$
such that the operator which is bounded, for any $a\in\A$, is rather the twisted commutator
\begin{equation} 
  [D, a]_\rho := Da - \rho(a) D \, .
\end{equation}

The original motivation of \cite{Connes:1938fk} was to deal with type III operator algebras, for which there is no non trivial trace. 
The examples there were spectral triples perturbed by a conformal transformation and 
spectral triples associated to codimension 1 foliations. 
Twisted spectral triples are relevant for quantum groups (and related spaces) where twisting of the algebra is a natural phenomenon \cite{KMT03}, \cite{HK06}; see \cite{KS12} for a twisted spectral triple 
for the quantum group $SU(2)$. They also appear in 
$C^*$-dynamical systems \cite{Fathizadeh:2015aa}. Recently,  twisted spectral triples
have also occurred in the description of the Standard Model of elementary particles \cite{buckley}. Here twisting allows one to build models beyond the (spectral approach to the)
Standard Model without modifying the fermionic
content of the theory\cite{Devastato:2013fk}, \cite{Chamseddine:2013uq}. This is obtained by twisting the spectral
triple of the Standard Model of \cite{Chamseddine:2007oz} while keeping the
Hilbert space and the Dirac operator untouched.

In the following we generalize this construction to arbitrary spectral
triples. We
first show in Sect.~\ref{sec:realtwist} how to incorporate the real structure
in the twisted framework (Definition \ref{def:realtwist}),  in a way compatible with the fluctuation of
the metric (Proposition \ref{prop:twistfluct}). In Sect.~\ref{sect:minimal}
we formalize the idea of \emph{minimal twist}, that is twisting a
spectral triple without touching the Hilbert space and Dirac
operators (Definition \ref{deftwist}). A procedure to minimally twist any graded spectral
triple is presented in Proposition \ref{proptwist}, extended to the real
case in Proposition \ref{prop:twistreal}. Next, Sect.~\ref{sec:app}
deals with commutative and 
almost commutative geometries with a twisting by grading that is essentially unique. 
Finally, Sect.~\ref{sec:applications} is devoted to some applications, notably to study twisted fluctuations of a free 
Dirac operator and touches on possible uses in the spectral action approach to the Standard Model with a more 
thorough analysis of these reported elsewhere.


\bigskip
\noindent
{\em Acknowledgments}. We thank Paolo Antonini, Ludwik Dabrowski, Gianfausto Dell'Antonio, Koen van den Dungen, Alessandro Michelangeli for useful discussions during the seminars in Trieste on 13th and 19th November  2015 where this work was presented. 

\section{Real twisted spectral triple structure}\label{sec:realtwist} 

We first extend the twisting of spectral triples to include the real structure 
and then introduce twisted-fluctuations of the metric. Proposition \ref{prop:twistfluct} shows that the picture is coherent: a twisted-fluctuated real spectral triple is a real twisted spectral triple.
 
\subsection{Really twisting}
\label{subsec:realtwist}
Recall \cite{Connes-Marcolli} that a spectral triple $(\A, \HH, D)$
consists in an involutive algebra $\A$ 
acting faithfully 
\footnote{When possible we omit the representation symbol and identify 
$a\in\A$ with its representation $\pi(a)\in\cL(\HH)$.}
by bounded operators on a Hilbert space $\HH$ 
together with a self-adjoint operator $D$ with compact resolvent such that 
$[D,a]$ is bounded for any
$a\in\A$. It is graded (or even) when there exists a grading of $\HH$, that
is a self-adjoint operator $\Gamma$ of square $\I$, that commutes with
$\A$ and anticommutes with $D$. 
Furthermore \cite{Connes:1996fu}, a real spectral triple of $KO$-dimension $k\in\{0,1, \dots, 7\}$ modulo $8$,
is a (graded) spectral triple 
together with an antilinear isometry operator $J$ on $\HH$ such that
\begin{equation}
  \label{eq:34}
  J^2= \epsilon(k), \quad JD = \epsilon'(k) DJ, \qquad \mbox{and} \quad J\Gamma = \epsilon''(k) \Gamma J ,
\end{equation}
 where $\epsilon, \epsilon', \epsilon''$ take value in $\left\{-1,+1\right\}$ as a function of $k$ 
 (the explicit table of these signs is not needed in the present paper). 
 Furthermore, the conjugate action of $J$,
   \begin{equation}
     \label{eq:55}
     b \mapsto J b^* J^{-1} \quad \forall b\in\A
   \end{equation}
implements an action of the opposite algebra $\A^\circ$, which is required to commute with the algebra,  
 \begin{equation}
   \label{eq:37}
   [a, J b^* J^{-1}] = 0 \quad \forall a,b \in\A, \qquad \mbox{(zero-order condition)}
 \end{equation}
as well as to commute with the commutator of $D$ with $\A$, 
\begin{equation}
  \label{eq:38}
  [[D, a], Jb^* J^{-1}] = 0 \quad \forall a, b \in \A, \qquad \mbox{(first-order condition)} .
\end{equation}
To avoid ambiguity it may be wise occasionally to reintroduce the representation symbol. Thus, if $\pi$ is the representation of $\A$ on $\HH$, then one gets a representation of $\A^\circ$ on $\HH$ by 
 \begin{equation} \label{oppalg}
\pi^\circ(b):= J \pi(b^*) J^{-1}
 \end{equation}
and \eqref{eq:37} is the statement that the operator algebras $\pi(\A)$ and $\pi^\circ(\A^\circ)$ commute. On the other hand, dropping the representation symbols, we shall write the above as $b^\circ = J  b^* J^{-1}$.

Twisted and graded twisted spectral triples were defined  in \cite{Connes:1938fk} by replacing the boundedness 
of the commutator $[D,a]$ with the requirement that the twisted commutator  
\begin{equation} \label{eq:12}
[D, a]_\rho := Da - \rho(a) D , 
\end{equation}
for an automorphism $\rho\in \Aut(\A)$, be bounded for any $a\in\A$.
Furthermore, the automorphism $\rho$ is not taken to be a $*$-automorphism, but rather to satisfy 
\begin{equation}\label{star-1}
\rho(a^*) = (\rho^{-1}(a))^*. 
\end{equation}
Such an automorphism was named {\em regular} in \cite{KMT03}.
The requirement \eqref{star-1} has origin in the additional assumption (coming from considerations in index theory in \cite{Connes:1938fk}) that the algebra $\A$ has a 1-parameter group of automorphisms $\{\rho_t \}_{t \in \R}$ and that $\rho$ coincides with 
the value at $t=\ii$ of the analytic extension of $\{\rho_t \}_{t \in \R}$. In typical examples (for instance 
the spectral triples associated to codimension 1 foliations) the 1-parameter group of automorphisms is the modular automorphism group of a twisted trace. Such twisted traces appear naturally with twisted spectral 
triples. Indeed, if $(\A, \HH, D)$ is a $\rho$-twisted spectral triple with $D^{-1} \in \cL^{n,\infty}$,
the Dixmier ideal, from \cite[Prop.~3.3]{Connes:1938fk} the functional 
 \begin{equation}
\A \ni a \mapsto \varphi(a) = \ncint \, a D^{-n} := \Tr_\omega(a D^{-n}), 
\end{equation}
with $ \Tr_\omega$ the Dixmier trace, 
is a $\rho^{-n}$-trace, that is $\varphi(a b) = \varphi(b \rho^{-n}(a))$ for all $a,b\in \A$.


The algebras $\A$ and $\A^\circ$ have isomorphic automorphism groups. 
An isomorphism is:
\begin{equation} 
\Aut(\A) \ni \rho \to \rho^\circ\in \Aut(\A^\circ) , \qquad  
\rho^\circ(b^\circ) := (\rho^{-1}(b))^\circ , \quad \forall b^\circ\in\A^\circ .   
\end{equation}
The use of $\rho^{-1}$ instead of $\rho$ is to parallel condition \eqref{star-1}. In a sense, the above means 
\begin{equation}
  \rho^\circ (J b^* J^{-1}) = J (\rho^{-1}(b))^* J^{-1} = J \rho(b^*) J^{-1},  
\end{equation}
and the second equality is due to condition \eqref{star-1}. 
We are then led to the following.
\begin{df}
\label{def:realtwist}
 A real twisted spectral triple of $KO$-dimension $k$ is the datum of a twisted spectral triple $(\A, \HH, D;\, \rho)$ together with an antilinear isometry  operator $J$ satisfying the rule of signs \eqref{eq:34},  the zero-order condition \eqref{eq:37}, and the
  twisted first-order condition
  \begin{equation}
    \label{eq:39}
      [[D, a]_\rho,  Jb^* J^{-1}]_{\rho^\circ} = 0 , \quad \forall a, b \in \A.
  \end{equation}
\end{df}
\noindent
By inserting the representation symbols and with condition \eqref{star-1}, the above reads as
  \begin{equation} \label{eq:39-bis}
 \big( D\pi(a) - \pi(\rho(a)) D \big)  J \pi(b^*) J^{-1} - J \pi(\rho(b^*)) J^{-1}  \big( D\pi(a) - \pi(\rho(a)) D \big) = 0 , 
 \quad \forall a, b \in \A.
  \end{equation}
We notice that the condition \eqref{eq:39} is symmetric in $\A$ and $\A^\circ$. Indeed, a use of the zero-order conditions $[a, J b^* J^{-1}] = 0$ and $[\rho(a), J (\rho^{-1}(b))^* J^{-1}] = 0$, transforms \eqref{eq:39} into
   \begin{equation}
    \label{eq:39b}
      [[D, Jb^* J^{-1}]_{\rho^\circ},  a]_\rho = 0 , \quad \forall a, b \in \A , 
  \end{equation}
or, for all $a, b \in \A$, 
  \begin{equation} \label{eq:39b-bis}
 \big( D J \pi(b^*) J^{-1} - J \pi(\rho(b^*)) J^{-1} D \big) \pi(a) - \pi(\rho(a))  \big( D J \pi(b^*) J^{-1} - J \pi(\rho(b^*)) J^{-1} D \big) = 0.
  \end{equation}

\begin{rem}
\textup{
One could consider twisting also the zero-order condition
\eqref{eq:37}, and examples from quantum groups 
(see for instance \cite{DLPS}) ---
for which the zero-order condition is valid only modulo infinitesimals
of arbitrary high order --- seems to suggest this possibility. However, from the point of view of the present paper this would introduce unnecessary complication: after all the twist seems to be relevant when the commutator with the operator $D$ is involved. A further, {\it a posteriori}  justification comes from the fluctuation of the metric, as explained
below after Lemma \ref{tformcomm}. 
}
\end{rem}

\subsection{Twisted-fluctuation of the metric}
\label{subsec:twistfluc}

Fluctuations of the metric \cite{Connes:1996fu}  easily adapt to the
twisted case.  Given a twisted spectral triple $(\A, \HH, D; \rho)$, one defines
  \begin{equation}
    \label{eq:83}
    \Omega_D^1:=\left\{ \sum\nolimits_j a_j [D, b_j]_\rho\; , \quad a_j, b_j \in \A\right\}
  \end{equation}
the set of twisted $1$-forms. Noticing that 
\begin{equation}
  \label{eq:98}
  [D, ab]_\rho = [D, a]_\rho b + \rho(a) [D, b]_\rho,
\end{equation}
one has \cite[Prop. 3.4]{Connes:1938fk} that $[D,\cdot]_\rho$ is a derivation of $\A$ in $\Omega^1_D$
as soon as the latter is viewed as a $\A$-bimodule with twisted action
on the left:
\begin{equation}
  \label{eq:99}
  a\cdot \xi \cdot b = \rho(a) \,\xi\, b \quad \forall a,b\in\A, \; \xi\in \Omega^1_D.
\end{equation}
\begin{lemma}\label{tformcomm}
For any $A_\rho\in \Omega_D^1$ and any $a, b\in\A$,  it holds that 
\begin{equation}
  [A_\rho , J b^* J^{-1}]_{\rho^\circ} = 0 \quad \mbox{and} \quad [J A_\rho J^{-1}, a]_\rho = 0.
\end{equation}
\end{lemma}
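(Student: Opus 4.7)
The plan is to reduce the lemma, by linearity, to the case where $A_\rho$ is a single elementary twisted $1$-form $A_\rho = a_0 [D,b_0]_\rho$ with $a_0,b_0\in\A$. Both identities should then follow from combining just two ingredients: the zero-order condition \eqref{eq:37} and the twisted first-order condition \eqref{eq:39} (in its two equivalent forms \eqref{eq:39} and \eqref{eq:39b}), together with the simple observation that $\rho^\circ(Jc^*J^{-1}) = J\rho(c^*)J^{-1}$ is again in $\pi^\circ(\A^\circ)$, which follows from the star-condition \eqref{star-1}.

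For the first identity I would simply expand
\begin{equation*}
[a_0 [D,b_0]_\rho,\, Jc^*J^{-1}]_{\rho^\circ} = a_0 [D,b_0]_\rho\, Jc^*J^{-1} - \rho^\circ(Jc^*J^{-1})\, a_0 [D,b_0]_\rho .
\end{equation*}
In the first term I use the twisted first-order condition \eqref{eq:39} to move $Jc^*J^{-1}$ past $[D,b_0]_\rho$, at the price of replacing it by $\rho^\circ(Jc^*J^{-1})$. Then the zero-order condition \eqref{eq:37}, applied to $a_0$ and $\rho^\circ(Jc^*J^{-1})=J\rho(c^*)J^{-1}\in\pi^\circ(\A^\circ)$, lets me swap $a_0$ with this factor, yielding exactly the second term. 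The difference vanishes.

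For the second identity the key preliminary step is to rewrite $J[D,b_0]_\rho J^{-1}$ in terms of a twisted commutator on the opposite side. Writing $b_0 = (b_0^*)^*$ and using $JDJ^{-1}=\epsilon' D$ together with \eqref{star-1}, I expect to obtain
\begin{equation*}
J[D,b_0]_\rho J^{-1} = \epsilon'\,[D,\, Jb_0 J^{-1}]_{\rho^\circ},
\end{equation*}
so that $J A_\rho J^{-1} = \epsilon'\, (Ja_0J^{-1}) [D, Jb_0 J^{-1}]_{\rho^\circ}$. Now the argument mirrors the first identity, but using the symmetric form \eqref{eq:39b} of the twisted first-order condition to push $a$ through $[D,Jb_0J^{-1}]_{\rho^\circ}$ (picking up a $\rho(a)$), and then the zero-order condition to commute $\rho(a)\in\A$ with $Ja_0J^{-1}\in\pi^\circ(\A^\circ)$.

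The only mildly delicate point, and the place where the definitions must mesh, is the identification $\rho^\circ(Jc^*J^{-1})=J\rho(c^*)J^{-1}$: without the regularity assumption \eqref{star-1} on $\rho$ this would not produce an element of $\pi^\circ(\A^\circ)$, and the zero-order condition could not be applied at the crucial step. Everything else is a bookkeeping exercise with the twisted Leibniz rule \eqref{eq:98}, which handles the passage from the elementary case to arbitrary sums in $\Omega_D^1$.
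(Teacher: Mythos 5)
Your proof is correct, and the first identity is established exactly as in the paper: reduce by linearity to $a_0[D,b_0]_\rho$, then combine the twisted first-order condition \eqref{eq:39} with the zero-order condition \eqref{eq:37} applied to $J\rho(c^*)J^{-1}$, which lies in $\pi^\circ(\A^\circ)$ because $\rho(c^*)\in\A$ (and, via \eqref{star-1}, equals $\rho^\circ(Jc^*J^{-1})$, so the twisted commutator closes up). Where you diverge is the second identity: the paper does not recompute anything, but simply conjugates the already-proved first identity by $J$, inserting $J^2$ and $J^{-2}$ and using only $J^2=\epsilon\I$ from \eqref{eq:34} to read off $J\bigl([JA_\rho J^{-1},b^*]_\rho\bigr)J^{-1}=0$. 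You instead argue directly, first proving the conjugation identity $J[D,b_0]_\rho J^{-1}=\epsilon'[D,Jb_0J^{-1}]_{\rho^\circ}$ (which additionally uses $JD=\epsilon' DJ$ and again \eqref{star-1}) and then repeating the commutation argument with the symmetric form \eqref{eq:39b} of the first-order condition plus the zero-order condition. Both routes are sound; the paper's is shorter and needs fewer axioms of the real structure (no use of $JD=\epsilon'DJ$), while yours has the mild virtue of exhibiting $JA_\rho J^{-1}$ explicitly as ($\epsilon'$ times) an element of the opposite-side twisted one-forms, which makes the symmetry between the two identities transparent. The only caveat is that your use of \eqref{eq:39b} presupposes its equivalence with \eqref{eq:39}, which itself rests on the two zero-order commutations $[a,Jb^*J^{-1}]=0$ and $[\rho(a),J(\rho^{-1}(b))^*J^{-1}]=0$; the paper records this equivalence before the lemma, so this is legitimate, but it should be cited rather than treated as automatic.
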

\begin{proof}
 If $A_\rho = \sum_j a_j [D, c_j]_\rho$, for $a_j, c_j \in \A$, by linearity, one needs to show that 
$$
 a_j  [D, c_j]_\rho \, J b^* J^{-1} - J  \rho(b^*) J^{-1}  \, a_j  [D, c_j]_\rho = 0 . 
$$
The zero-order condition \eqref{eq:37} yields $J \rho(b^*) J^{-1}  a_j =   a_j  J \rho(b^*) J^{-1}$ and the l.h.s. becomes
$$
 a_j \left( \, [D, c_j]_\rho \, J b^* J^{-1} - J \rho(b^*) J^{-1} \, [D, c_j]_\rho \, \right)
$$
whose vanishing follows from the twisted first-order condition \eqref{eq:39}. Next, by expanding and inserting 
$J^2$ and $J^{-2}$ (and using $\epsilon^2 =1$ from the signs \eqref{eq:34}) one computes, 
\begin{align*}
0 & =  A_\rho J b^* J^{-1} - J \rho(b^*) J^{-1} A_\rho 
= J^2 A_\rho J^{-2} J b^* J^{-1} - J \rho(b^*) J^{-1} J^2 A_\rho J^{-2} \\
& = J \left( J A_\rho J^{-1} b^* -  \rho(b^*) J A_\rho J^{-1} \right) J^{-1} = J \left( [J A_\rho J^{-1}, b^*]_\rho \right) J^{-1} 
\end{align*}
and renaming $b^*=a$ we get the second equation above.  
\end{proof}
\begin{rem}
\textup{
We see from the above proof that a twisted first-order condition goes well with 
a zero-order condition which is not twisted. 
It is also worth pointing out that, as one would expect, a twisted and an untwisted zero-order condition 
cannot co-exist. By requiring that 
\begin{equation}
[a, Jb^*J^{-1}] = 0 = [a, J b^* J^{-1}]_{\rho^0}      \label{eq:43}  
\end{equation}
for any $a, b \in \A$, a direct computation yields $ J (b^* - \rho(b^*))J^{-1}=0$, that is, $\rho$ has to be the identity. 
On the other hand, as shown by examples below, for finite matrix geometries a twisted and an untwisted first-order condition are not mutually exclusive. 
} 
\end{rem}
\begin{df}
\label{def:twistfluct}  Let $(\A, \HH, D; \rho), J$ be a real twisted spectral triple.
A twisted-fluctuation of $D$ by $\A$ is any
  self-adjoint operator of the kind
  \begin{equation}
    \label{eq:5}
    D_{A_\rho}:= D + A_\rho + \epsilon'  J A_\rho J^{-1}
  \end{equation}
  where $A_\rho\in\Omega_D^1$ and the sign $\epsilon'$ is given as in \eqref{eq:34}.
\end{df}

\noindent Notice that we ask $D_{A_\rho}$ to be self-adjoint, but this is not necessarily the case for $A_\rho$. 

\begin{prop}
\label{prop:twistfluct}
Any twisted-fluctuation $D_{A_\rho}$ of a real twisted
spectral triple $(\A, \HH, D; \rho)$ yields a real twisted
spectral triple
\begin{equation}
  \label{eq:60}
  (\A, \HH, D_{A_\rho}; \rho)
\end{equation}
with the same real structure and $KO$-dimension, and same grading $\Gamma$ (if any).
\end{prop}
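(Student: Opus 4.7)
The plan is to verify each clause of Definition \ref{def:realtwist} for $(\A, \HH, D_{A_\rho}; \rho)$ while keeping the same antilinear isometry $J$ (and same grading $\Gamma$ if present). Self-adjointness of $D_{A_\rho}$ is built into Definition \ref{def:twistfluct}. Compact resolvent will follow from a standard bounded-perturbation argument: since $A_\rho = \sum_j a_j [D, c_j]_\rho$ is a finite sum of products of bounded operators, the full perturbation $A_\rho + \epsilon' J A_\rho J^{-1}$ is bounded. The zero-order condition is trivially inherited because it involves only $\A$ and $J$, both unmodified.

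Next I would check the commutation with $J$ and $\Gamma$. Using $J^2 = \epsilon$ (so $J^2 A_\rho J^{-2} = A_\rho$) and $J D J^{-1} = \epsilon' D$, a direct expansion gives
\begin{equation*}
J D_{A_\rho} J^{-1} = \epsilon' D + J A_\rho J^{-1} + \epsilon' A_\rho = \epsilon' D_{A_\rho}.
\end{equation*}
For the grading, since $\Gamma$ commutes with $\A$ and $\Gamma D \Gamma^{-1} = -D$, one gets $\Gamma [D, c_j]_\rho \Gamma^{-1} = -[D, c_j]_\rho$ and hence $\Gamma A_\rho \Gamma^{-1} = -A_\rho$; combined with $\Gamma J = \epsilon'' J \Gamma$ and $\Gamma J^{-1} = \epsilon'' J^{-1} \Gamma$ (both obtained from $J\Gamma = \epsilon'' \Gamma J$), this yields $\Gamma (J A_\rho J^{-1}) \Gamma^{-1} = -J A_\rho J^{-1}$, so $\Gamma D_{A_\rho} \Gamma^{-1} = -D_{A_\rho}$ as required.

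For the twisted commutator I expand
\begin{equation*}
[D_{A_\rho}, a]_\rho = [D, a]_\rho + [A_\rho, a]_\rho + \epsilon'[J A_\rho J^{-1}, a]_\rho.
\end{equation*}
The last summand vanishes by Lemma \ref{tformcomm}, and the remaining two are manifestly bounded. The main and only delicate point is the twisted first-order condition. The strategy is to show $[D_{A_\rho}, a]_\rho \in \Omega_D^1$, after which Lemma \ref{tformcomm} applied to this one-form directly delivers \eqref{eq:39} for $D_{A_\rho}$. Plainly $[D, a]_\rho = 1 \cdot [D, a]_\rho$ lies in $\Omega_D^1$. For $[A_\rho, a]_\rho = A_\rho a - \rho(a) A_\rho$, the twisted Leibniz rule \eqref{eq:98} rewrites $a_j [D, c_j]_\rho a = a_j [D, c_j a]_\rho - a_j \rho(c_j) [D, a]_\rho$, placing $A_\rho a$ in $\Omega_D^1$, while $\rho(a) A_\rho = \sum_j \rho(a) a_j [D, c_j]_\rho$ is also in $\Omega_D^1$ because $\rho$ is an automorphism of $\A$. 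In short, the anticipated obstacle — the twisted first-order condition for $D_{A_\rho}$ — is dissolved entirely by Lemma \ref{tformcomm}, which simultaneously eliminates the $J A_\rho J^{-1}$ contribution to the twisted commutator and handles the resulting one-form on the right.
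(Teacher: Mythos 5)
Your proof is correct and follows essentially the same route as the paper: the same expansion of $[D_{A_\rho},a]_\rho$, the same direct computations for $J$ and $\Gamma$, and the same reliance on Lemma \ref{tformcomm} together with the observation that $[A_\rho,a]_\rho$ is a twisted one-form. The only cosmetic difference is that you also place $[D,a]_\rho = 1\cdot[D,a]_\rho$ in $\Omega_D^1$ (tacitly using a unit in $\A$) so as to apply Lemma \ref{tformcomm} once to the whole commutator, whereas the paper disposes of that term directly by the twisted first-order condition of the original triple, which avoids any unitality assumption.
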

\begin{proof} For any $a\in \A$, one has
  \begin{equation}
    \label{eq:54}
    [D_{A_\rho}, a]_\rho  = [D, a]_\rho + [A_\rho, a]_\rho + \epsilon' [ J A_\rho J^{-1}, a]_\rho.
  \end{equation}
  The first term in the r.h.s. is bounded since $(\A, \HH,
  D; \rho)$ is a twisted spectral triple. For the same reason,
  $A_\rho$ is bounded,  being the
  finite sum of products of bounded operators. Thus the second  
  term in the r.h.s. of \eqref{eq:54} is bounded as well, being the
twisted commutator of bounded operators. 
From Lemma \ref{tformcomm} the last term in \eqref{eq:54} vanishes.
Hence \eqref{eq:60} is a twisted spectral triple. It is graded with the
 same grading $\Gamma$ as $(\A, \HH, D; \rho)$ if the latter is graded:  
 one easily checks that $\Gamma$ anticommutes with $A_\rho$ and $J A_\rho J^{-1}$, 
 hence with $D_{A_\rho}$. 

To show that the real structure $J$ of $(\A, \HH, D; \rho)$ 
is a real structure for \eqref{eq:60} with the same $KO$-dimension,
we first check that
\begin{equation}
JD_{A_\rho} = \epsilon' D_{A_\rho} J
\label{eq:61}
\end{equation}
for the same sign $\epsilon'$ as in $JD = \epsilon' DJ$. This follows from
definition \eqref{eq:5}: 
\begin{align}
  \label{eq:62}
  JD_{A_\rho} J^{-1} &= J D J^{-1} + J A_\rho J^{-1}+ \epsilon'  J^2
  A_\rho J^{-2}, \nonumber \\
&= \epsilon' D + J A_\rho J^{-1}+ \epsilon' A_\rho,  \nonumber  \\
&= \epsilon' (D + \epsilon' J A_\rho J^{-1}+ A_\rho) = \epsilon' D_{A_\rho}
\end{align}
 where we used 
 ${\epsilon'}^2=1$, $J^2 = \epsilon \I$ and $J^{-2} = \epsilon^{-1} \I$. 

Finally we must prove the twisted first-order
condition
\begin{equation}
  \label{eq:64}
  [[D_{A_\rho}, a]_\rho, J b^*J^{-1}]_{\rho_0} = 0 \quad \forall a,b
  \in \A.
\end{equation}
Writing $b^\circ = J b^*J^{-1}$,  the l.h.s. of the equation above is 
\begin{equation}
  \label{eq:65}
   [[D, a]_\rho , b^\circ]_{\rho^\circ} + [[A_\rho, a]_\rho, b^\circ]_{\rho^\circ} +\epsilon' [[J A_\rho J^{-1},a]_\rho,b^\circ]_{\rho^\circ}. 
\end{equation}
The first term vanishes by the twisted first-order condition for $(\A,
\HH, D; \rho)$. Next, if $A_\rho\in\Omega_D^1$, it follows that 
$A_\rho' := A_\rho a - \rho(a) A_\rho$ is in $\Omega_D^1$ as well (recall the bimodule structure \eqref{eq:99}).  Then, Lemma \ref{tformcomm} yields 
\begin{equation}
  \label{eq:66}
 [[A_\rho, a]_\rho, b^\circ]_{\rho^\circ} = [A_\rho a - \rho(a) A_\rho, b^\circ]_{\rho^\circ} 
 = [A_\rho', b^\circ]_{\rho^\circ} = 0 , 
\end{equation}
that is, the second term of \eqref{eq:65} vanishes. 
For the third term, again from Lemma \ref{tformcomm} we know that in fact $[J A_\rho J^{-1},a]_\rho = 0$
and the third term of the r.h.s. of \eqref{eq:65} is zero as well. 
\end{proof}

As in the non-twisted case there is a composition law, that is a twisted
 fluctuation of a twisted fluctuation is a twisted fluctuation of the
 initial spectral triple.
 \begin{prop} 
Let 
   \begin{equation}
     \label{eq:85}
     D_\rho= D + A_\rho +\epsilon' J A_\rho J^{-1} \quad \mbox{with}  \quad A_\rho\in \Omega_D^1 
\end{equation}
be a twisted fluctuation of a real twisted spectral
   triple $(\A, \HH, D; \rho)$, and 
   \begin{equation}
     \label{eq:87}
    D'_\rho = D_\rho + A'_\rho + \epsilon' J A'_\rho J^{-1} \quad \mbox{with} \quad A'_\rho\in 
    \Omega_{D_\rho}^1
   \end{equation}
 be a fluctuation of $(\A, \HH, D_\rho; \rho)$. Then 
 \begin{equation}
   \label{eq:88}
   D'_\rho = D_\rho +
     A''_\rho +\epsilon' J A''_\rho J^{-1} \quad \mbox{with} \quad 
  A''_\rho = A_\rho + A'_\rho \in \Omega_D^1.
\end{equation}
 \end{prop}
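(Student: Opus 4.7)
The plan is to substitute the explicit form of $D_\rho$ into $A'_\rho$, use Lemma \ref{tformcomm} to annihilate the $J A_\rho J^{-1}$--piece, and then observe that the remaining expression still lies in $\Omega^1_D$. Once $A'_\rho \in \Omega^1_D$ is established, the absorption formula follows immediately by regrouping the $J\cdot J^{-1}$ terms after substituting \eqref{eq:85} into $D'_\rho = D_\rho + A'_\rho + \epsilon' J A'_\rho J^{-1}$.

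Concretely, write $A'_\rho = \sum_j a_j\,[D_\rho, c_j]_\rho$ with $a_j, c_j \in \A$. Since the twisted commutator is linear in its first argument,
\[
[D_\rho, c_j]_\rho = [D, c_j]_\rho + [A_\rho, c_j]_\rho + \epsilon'\,[J A_\rho J^{-1}, c_j]_\rho,
\]
and the last term vanishes by Lemma \ref{tformcomm}. The middle term $A_\rho c_j - \rho(c_j) A_\rho$ is visibly in $\Omega^1_D$: expanding $A_\rho = \sum_k b_k [D, d_k]_\rho$ displays both $A_\rho c_j$ and $\rho(c_j) A_\rho$ as finite sums of the generating form $\alpha [D,\beta]_\rho$ defining $\Omega^1_D$ in \eqref{eq:83}. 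Left-multiplying by $a_j$ and summing over $j$ preserves this, so $A'_\rho \in \Omega^1_D$, and hence $A''_\rho := A_\rho + A'_\rho \in \Omega^1_D$.

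Finally, substituting $D_\rho = D + A_\rho + \epsilon' J A_\rho J^{-1}$ into $D'_\rho = D_\rho + A'_\rho + \epsilon' J A'_\rho J^{-1}$ and using linearity of the map $X \mapsto J X J^{-1}$ gives
\[
D'_\rho = D + (A_\rho + A'_\rho) + \epsilon' J(A_\rho + A'_\rho) J^{-1} = D + A''_\rho + \epsilon' J A''_\rho J^{-1},
\]
which exhibits $D'_\rho$ as a twisted fluctuation of the original $(\A, \HH, D; \rho)$ by $A''_\rho$. The only non-trivial step is the appeal to Lemma \ref{tformcomm}: without the identity $[J A_\rho J^{-1}, c_j]_\rho = 0$ one could not re-express a $D_\rho$-form as a $D$-form, and the composition law would break. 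Everything else is algebraic bookkeeping.
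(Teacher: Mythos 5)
Your argument is correct and takes essentially the same route as the paper's own proof: expand $A'_\rho=\sum_j a_j[D_\rho,c_j]_\rho$ using linearity of the twisted commutator, kill the $J A_\rho J^{-1}$ term with Lemma \ref{tformcomm}, conclude $A'_\rho\in\Omega^1_D$, and regroup. The only cosmetic point is that the membership of $A_\rho c_j$ in $\Omega^1_D$ is not quite ``visible'' from the generators alone; it rests on the twisted Leibniz rule \eqref{eq:98}, i.e.\ the bimodule structure \eqref{eq:99}, which is precisely the justification the paper invokes for that term.
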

 \begin{proof}
We wish to show that $D'_\rho = D_\rho + A'_\rho + \epsilon' J A'_\rho J^{-1} = D + A''_\rho + \epsilon ' J A''_\rho J^{-1}$, with
$ A''_\rho \in \Omega_D^1$. 
Let $A_\rho = \sum_j a_k [D, b_k]_\rho$, and 
$A'_\rho = \sum_j a'_k [D_\rho, b'_k]_\rho$ with $a_k, b_k, a'_k, b'_k\in\A$. Omitting the summation
indices and symbol, one has
\begin{align}
  \label{eq:90}
A'_\rho &=  a'[D + A_\rho +\epsilon'  J A_\rho J^{-1} , b']_\rho,  \nonumber\\
& = a'[D,b']_\rho +a'[A_\rho, b']_\rho + \epsilon' a'[JA_\rho J^{-1},  b' ]_\rho.
\end{align}
The first term is in $\Omega^1_D$. The second as well from the
bimodule structure~\eqref{eq:99}. The last term vanishes by Lemma \ref{tformcomm}. 
Hence $A'_\rho$ is in $\Omega^1_D$, and so is $A''_\rho = A_\rho +
A'_\rho$.
 \end{proof}
In other terms, in contrast with the fluctuations without first order condition developed in 
\cite{Chamseddine:2013fk}, twists do not alter the group structure of the fluctuations of the metric.

\section{Minimal twisting for graded spectral triples} 
\label{sect:minimal}

In this section, we work out a general procedure to twist a (real)
graded spectral triple while keeping the Dirac operator and the Hilbert space unchanged. 
The twisting uses the grading.

\subsection{Minimal twisting}
On a manifold there is no room for a twisting; by this we mean the following. 
Start with the canonical spectral triple of a closed spin manifold $\M$,
\begin{equation}
 (\cinf,\, L^2(\M,S),\, \ds := - \ii \gamma^\mu \nabla_\mu) ,
 \label{eq:1}
 \end{equation} 
where $\cinf$ acts on the Hilbert space $L^2(\M,S)$ of square
integrable spinors by multiplication,
\begin{equation}
(\pi_\M(f)\psi)(x) :=f(x)\psi(x),
\label{eq:151}
\end{equation}
and $\ds$ is the Dirac operator, with $\nabla_\mu = \partial_\mu + \omega_\mu$ 
the covariant derivative in the spin bundle. Then any twisted commutator would be of the form
\begin{equation} \label{eq:13}
[\ds, f]_\rho = -\ii \gamma^\mu (\partial_\mu f ) + ( f-\rho(f) ) \, \ds 
\end{equation}
and it would be bounded for any $f\in\cinf$ if and only if
\begin{equation}
\label{eq:14}
f-\rho(f) = 0 ,
\end{equation}
for any function $f$, which just means that $\rho$ is the identity. 

Equation \eqref{eq:14} follows from the following more general result.
\begin{lemma}
\label{lemmacompact}
  Let $(\A, \HH, D)$ be a spectral triple,  and $\rho$ an automorphism
  of $\A$ such that $(\A, \HH, D; \rho)$ is a twisted spectral
  triple. Then $\pi(a) - \pi(\rho(a))$ is a compact operator  for any $a\in\A$. 
\end{lemma}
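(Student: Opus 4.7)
The plan is to exploit the fact that $(\A,\HH,D)$ is \emph{both} a spectral triple (so $[D,a]$ is bounded) \emph{and} a $\rho$-twisted spectral triple (so $[D,a]_\rho$ is bounded). Subtracting, one gets
\begin{equation*}
[D,a] - [D,a]_\rho = \bigl(\rho(a)-a\bigr)\,D,
\end{equation*}
so that the unbounded-looking operator $(\rho(a)-a)D$ is in fact bounded on the domain of $D$, and extends to a bounded operator on $\HH$.

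Next I would use that $D$ has compact resolvent. Setting $b:=\pi(\rho(a))-\pi(a)$, which is itself bounded (difference of two representatives of algebra elements), I would write
\begin{equation*}
b \;=\; b\,(D+\ii\I)\,(D+\ii\I)^{-1} \;=\; (bD)\,(D+\ii\I)^{-1} + \ii\, b\,(D+\ii\I)^{-1}.
\end{equation*}
The first summand is the product of the bounded operator $bD$ (from the previous step) with the compact operator $(D+\ii\I)^{-1}$, hence compact; the second summand is bounded times compact, hence compact. Therefore $b=\pi(\rho(a))-\pi(a)$ is compact.

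There is no real obstacle: the whole argument rests on the trivial identity relating the two commutators and on the compactness of the resolvent of $D$. The only point that one must be slightly careful about is that $(\rho(a)-a)D$ is a priori only defined on $\mathrm{Dom}(D)$, so one should record that the boundedness on this dense subspace allows a unique bounded extension to all of $\HH$ before multiplying by $(D+\ii\I)^{-1}$, whose range lies in $\mathrm{Dom}(D)$.
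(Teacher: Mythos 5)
Your argument is correct and is essentially the paper's own proof: the same identity $[D,a]-[D,a]_\rho=(\rho(a)-a)D$ shows $(\rho(a)-a)D$ is bounded, and then multiplication by the compact resolvent (the paper uses $K(D-\lambda\I)(D-\lambda\I)^{-1}=K$ for $\lambda$ in the resolvent set, you take $\lambda=-\ii$ and split off the extra term $\ii b(D+\ii\I)^{-1}$) yields compactness of $\pi(\rho(a))-\pi(a)$. Your remark on the domain of $(\rho(a)-a)D$ and its bounded extension is a welcome extra precision but does not change the route.
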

\begin{proof}
 By simple algebraic manipulations one gets
\begin{equation} \label{eq:1111}
[D, \pi(a)]_\rho = D\pi(a) - \pi(\rho(a)) D = [D, \pi(a)] - \pi( \rho(a) - a) D . 
\end{equation}
Denote $K:= \pi(\rho(a) - a)$; being the representation of $\rho(a)-a\in\A$, this is a bounded operator. 
By definition of a spectral triple, the commutator $[D, \pi(a)]$ is bounded, 
thus the boundedness of $[D, \pi(a)]_\rho$ implies that $KD=[D,
\pi(a)] -[D, \pi(a)]_\rho$ is
bounded.   Hence for any
$\lambda\in\C$ the operator
\begin{equation}
T_\lambda:= KD- \lambda K= K(D-\lambda\I)\label{eq:30}
\end{equation}
is bounded. Again by definition $D$ has a compact resolvent. Since
compact operators on a Hilbert space form an ideal in the algebra of bounded operators, 
one concludes that for any $\lambda$
in the resolvent set of $D$, the operator
\begin{equation}
T_\lambda(D-\lambda\I)^{-1} = K
\label{eq:31}
\end{equation}
is compact.\end{proof}

\noindent When $\A=C^\infty(\M)$, Lemma~\ref{lemmacompact} implies \eqref{eq:14}. Indeed
there is no non-zero function $f\in\A$ that acts as
a compact operator: the spectrum of a compact operator is discrete,
while the spectrum of $\pi_\M(f)$ is the range of $f$, which is
discrete only if $f$ is constant. But then $\pi_\M(f)$ is a multiple
of the identity, which is not a compact operator.
\smallskip

A way to modify the canonical spectral triple of a manifold in \eqref{eq:1} to allow for non-trivial twistings consists in modifying 
the Dirac operator, for instance by lifting a conformal transformation like is done in \cite{Connes:1938fk}. 
Having in mind applications to the Standard Model of elementary particles, we aim however at
keeping the Dirac operator and the Hilbert space
unchanged, since they encode the fermionic content of the theory that one does not wish to change. 
Then, the only elements we are allowed to play with are the algebra and/or its representation. 
Modifying only the latter does not help: if instead of the multiplicative representation \eqref{eq:151}
one let $f$ acts as $(f\psi)(x) =f(x)p(x) \psi(x) $ 
with $p$ an operator-valued function --- for instance $p$ could be the constant projection on
a subspace $\HH$ of $L^2(\M, S)$, for a reducible representation ---,
then, the extra term in the twisted commutator as in \eqref{eq:13} that needs to vanish for any $f$ is 
$( f-\rho(f) ) p \, \ds $, and the conclusion does not change. 
  

Therefore, in order to twist the spectral triple \eqref{eq:1} in a minimal way, that is keeping both 
$\HH$ and $D$ unchanged, one needs to modify the algebra.
\begin{df}
\label{deftwist}
Let $\B$ be a unital involutive algebra.  
 A minimal twisting of a spectral triple $(\A, \HH, D)$ is a twisted
 spectral triple $(\A\otimes \B, \HH, D;\, \rho)$ where  
 $\rho$ an automorphism of $\A \otimes \B$. In addition, the representation of $\A\otimes \I_\B$ 
   coincides with the initial representation of $\A$, that is
 \begin{equation}
\pi(a\otimes \I_\B) = \pi_0(a)   \quad \forall a\in\A
\label{eq:46}
 \end{equation}
where $\pi_0$ and $\pi$ are the representations for $(\A, \HH, D)$ 
and $(\A\otimes \B, \HH, D;\, \rho)$.
\end{df}
\noindent 
Let us comment on the condition \eqref{eq:46}.
 From  the representation $\pi$ of $\A\otimes \B$,
one inherits two representations  of $\A$ and $\B$ on $\HH$,
\begin{equation}
\pi_\A(a):= \pi(a\otimes\I_\B), \quad\pi_\B(b):= \pi(\I_{\A}\otimes b ).
\label{eq:104}
\end{equation}
To make meaningful that  $(\A\otimes \B, \HH, D; \rho)$
is actually a twist of $(\A, \HH, D)$ and not simply a twisted spectral triple
with the same Hilbert space and Dirac operator, it is natural to
impose a relation between $\pi_\A$ and $\pi_0$. The most
obvious one is \eqref{eq:46}, that is
 \begin{equation}
\pi_\A = \pi_0.
\label{eq:101}
\end{equation}
Without any such requirement, Definition \ref{deftwist} would not be very helpful:
one could call ``twist of $(\A, \HH, D)$''  any
twisted spectral triple $(\B, \HH, D; \rho)$ with representation
$\tilde\pi$, by posing $\pi(a\otimes b):=\tilde\pi(b)$. In that case, instead of \eqref{eq:101}
one would have
\begin{equation}
\pi_\A(a) = \I \quad \forall a\in\A.
\label{eq:100}
\end{equation}
One could imagine some alternative to Definition \ref{deftwist} by
imposing a condition less constraining than  \eqref{eq:101} while more
significant than \eqref{eq:100}.
We shall not explore these possibilities here, also because the requirement
\eqref{eq:101} has the following (easy to establish) consequence
that will be of use later on for the Standard Model twisted spectral triple.

\begin{lemma}
A grading $\Gamma$ of the twisted spectral triple $(\A\otimes \B, \HH, D; \rho)$ is a grading of the spectral triple $(\A, \HH, D)$. On the other hand, 
a grading $\Gamma$ of  $(\A, \HH, D)$ is a grading of $(\A\otimes \B,
\HH, D; \rho)$ if and only if
\begin{equation}
[\Gamma, \pi(\I_A\otimes b)]=0 \quad \forall b\in\B.
\label{eq:204}
\end{equation}
\end{lemma}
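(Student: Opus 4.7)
The plan is to unwind the definition of a grading and check each of its four defining conditions ($\Gamma = \Gamma^*$, $\Gamma^2 = \I$, $\{\Gamma, D\} = 0$, and commutation with the algebra) against the two representations involved. Three of these properties involve only $\Gamma$, $\HH$, and $D$, so they transfer between the two spectral triples automatically; the content of the lemma therefore lies entirely in how $\Gamma$ interacts with the algebra representation.

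For the first assertion, I would start from a grading $\Gamma$ of $(\A \otimes \B, \HH, D; \rho)$. By hypothesis $[\Gamma, \pi(a \otimes b)] = 0$ for every $a \in \A$ and $b \in \B$. Specializing to $b = \I_\B$ and using the defining condition \eqref{eq:46} of a minimal twist, I get $[\Gamma, \pi_0(a)] = [\Gamma, \pi(a \otimes \I_\B)] = 0$ for all $a \in \A$. Together with the three algebra-free properties, this is exactly what is required for $\Gamma$ to grade $(\A, \HH, D)$.

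For the converse direction, one implication is immediate: if $\Gamma$ grades $(\A \otimes \B, \HH, D; \rho)$ it commutes in particular with the subfamily $\pi(\I_\A \otimes b)$, so \eqref{eq:204} holds. For the nontrivial direction, I would assume $\Gamma$ grades $(\A, \HH, D)$ and satisfies \eqref{eq:204}, and show commutation with the full representation of $\A \otimes \B$. The key algebraic identity is the factorization
\begin{equation*}
a \otimes b = (a \otimes \I_\B)(\I_\A \otimes b) \qquad \forall a \in \A,\; b \in \B,
\end{equation*}
which upon applying $\pi$ and using \eqref{eq:46} gives $\pi(a \otimes b) = \pi_0(a)\, \pi(\I_\A \otimes b)$. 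Since $\Gamma$ commutes with $\pi_0(a)$ by hypothesis and with $\pi(\I_\A \otimes b)$ by \eqref{eq:204}, it commutes with the product, hence with every $\pi(a \otimes b)$. By linearity this extends to all of $\A \otimes \B$, completing the proof.

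There is no real obstacle here: the statement is essentially a bookkeeping observation about tensor product algebras, relying only on the minimal-twist hypothesis \eqref{eq:46} and the fact that $\A \otimes \I_\B$ and $\I_\A \otimes \B$ generate $\A \otimes \B$ as an algebra. The only subtlety worth flagging is that the two sub-representations $\pi_\A$ and $\pi_\B$ defined in \eqref{eq:104} commute with each other as operators on $\HH$ (being the images of commuting subalgebras under the representation $\pi$), which is what allows the factorization argument to go through regardless of the order of the two factors.
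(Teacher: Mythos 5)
Your proof is correct and follows essentially the same route as the paper: both directions reduce to checking commutation with the representation, using $\pi(a\otimes b)=\pi_0(a)\,\pi(\I_\A\otimes b)$ together with \eqref{eq:46}, and specializing to $a=\I_\A$ (respectively $b=\I_\B$) to extract \eqref{eq:204} (respectively $[\Gamma,\pi_0(a)]=0$), exactly as in the paper's Leibniz-rule expansion \eqref{eq:207}. No gaps; your explicit remark that the algebra-free conditions transfer automatically matches the paper's opening observation.
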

\begin{proof}
Since the condition that $\Gamma$ anticommutes with $D$ is not touched, it is only a matter of checking  the commuting of $\Gamma$ with the relevant representation. 
If $\Gamma$ is a grading of $(\A\otimes \B, \HH, D; \rho)$, by definition it commutes with $\pi$, that is 
\begin{equation}
[\Gamma, \pi(A)] =0 \quad \forall A\in \A\otimes\B.
\label{eq:205}
\end{equation}
For $A= a\otimes\I_\B$, this yields
\begin{equation}
  \label{eq:206}
  [\Gamma, \pi_0(a)] =0 \quad \forall a\in \A,
\end{equation}
meaning that $\Gamma$ is also a grading of $(\A, \HH, D)$. On the other hand, for a
grading $\Gamma$ of $(\A, \HH, D)$ to be a grading of 
$(\A\otimes \B, \HH, D; \rho)$ one needs $[\Gamma, \pi(A)]= 0$ for any 
$A=\sum_j a_j\otimes b_j\in\A$. Expanding the commutator, one gets
\begin{align}
  \label{eq:207}
  [\Gamma, \pi(A)] &= \sum\nolimits_j [\Gamma, \pi(a_j\otimes\I_\B)\ \pi(\I_\A\otimes b_j)] \nonumber \\
  &= \sum\nolimits_j \Big( \pi_0(a_j)[\Gamma, \pi(\I_A\otimes b_j)] + 
  [\Gamma, \pi_0(a_j)] \pi(\I_\A\otimes b_j) \Big). 
\end{align}
The second term vanishes being 
$\Gamma$ a grading of $(\A, \HH, D)$. The
vanishing of \eqref{eq:207} thus implies \eqref{eq:204}  (take $a_j=\I_\A$). 
Conversely, \eqref{eq:204} implies the vanishing of \eqref{eq:207}. Hence
the result.
\end{proof}

In addition to the previous result, the requirement \eqref{eq:101} leads to a 
necessary condition for a twisted spectral triple $(\A\otimes \B, \HH, D; \rho)$
  to be a minimal twist of a spectral triple $(\A, \HH, D)$.
  \begin{lemma}
\label{lemma:aut}
    Let $(\A\otimes\B, \HH, D; \rho)$ be a minimal twist of a spectral triple
    $(\A, \HH, D)$. Then 
    \begin{equation}
      \label{eq:175}
      \pi(a\otimes\I_\B - \rho(a\otimes \I_\B)) D
\end{equation}
is a bounded operator for any $a\in\A$, implying that $\pi(a\otimes\I_\B -
\rho(a\otimes \I_\B))$ is a compact operator.    \end{lemma}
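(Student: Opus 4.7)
The plan is to adapt the argument of Lemma \ref{lemmacompact}, exploiting the defining feature \eqref{eq:46} of a minimal twist, namely that $\pi(a\otimes\I_\B) = \pi_0(a)$ for all $a\in\A$. This identification is crucial because it makes both the ordinary commutator $[D,\pi_0(a)]$ (bounded, since $(\A,\HH,D)$ is a spectral triple) and the twisted commutator $[D,\pi(a\otimes\I_\B)]_\rho$ (bounded, since $(\A\otimes\B,\HH,D;\rho)$ is a twisted spectral triple) simultaneously available for the same operator $\pi_0(a)$ on $\HH$.

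The first step is to write, in the spirit of \eqref{eq:1111},
\[
[D,\pi(a\otimes\I_\B)]_\rho \;=\; D\pi_0(a) - \pi(\rho(a\otimes\I_\B))\,D \;=\; [D,\pi_0(a)] \;-\; \pi\bigl(a\otimes\I_\B - \rho(a\otimes\I_\B)\bigr)\,D.
\]
Solving for the last term and observing that the two other terms are bounded, one immediately deduces the boundedness of the operator \eqref{eq:175}, which establishes the first assertion of the lemma.

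For the second assertion, I would set $K := \pi(a\otimes\I_\B - \rho(a\otimes\I_\B))$. This is a bounded operator, being the representation of an element of $\A\otimes\B$, and by the previous step $KD$ is bounded as well. From here the argument is identical to the closing lines of the proof of Lemma \ref{lemmacompact}: for any $\lambda$ in the resolvent set of $D$, the operator $K(D-\lambda\I) = KD - \lambda K$ is bounded, hence $K = \bigl(K(D-\lambda\I)\bigr)(D-\lambda\I)^{-1}$ is the product of a bounded operator with the compact resolvent of $D$, and is therefore compact.

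No real obstacle is expected. The proof is mechanical once one recognises that the decomposition above is the right one; the only non-trivial input is the minimal-twist condition \eqref{eq:46}, which is precisely what allows one to mix the ordinary and twisted commutators on the same operator and thereby isolate $KD$ as a bounded operator.
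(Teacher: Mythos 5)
Your proof is correct and follows the paper's own argument essentially verbatim: specialise \eqref{eq:1111} to $b=\I_\B$, use the minimal-twist condition \eqref{eq:46} to identify the untwisted commutator with the bounded operator $[D,\pi_0(a)]$, and then repeat the resolvent argument of Lemma \ref{lemmacompact} to pass from boundedness of $KD$ to compactness of $K$. Note only a small sign slip in your displayed identity: the correct decomposition is $[D,\pi(a\otimes\I_\B)]_\rho=[D,\pi_0(a)]+\pi\bigl(a\otimes\I_\B-\rho(a\otimes\I_\B)\bigr)D$, which of course does not affect the boundedness conclusion.
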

\begin{proof}
Equation \eqref{eq:1111} for
$b=\I_\B $ gives
\begin{equation}
      \label{eq:177}
      [D, \pi(a\otimes \I_\B)]_\rho = [D, \pi(a\otimes \I_\B)] -  \pi(
      \rho(a\otimes \I_\B)  - a\otimes \I_B)\,D.
    \end{equation}
The twisted commutator on the l.h.s. is bounded by hypothesis. From \eqref{eq:101}
and \eqref{eq:104}, the commutator on the r.h.s. is $[D, \pi_0(a)]$,
which is also bounded by hypothesis. Hence the first claim of the
lemma. The second claim is proven as in Lemma \ref{lemmacompact}.
\end{proof}
\begin{rem}
\label{rem3.5}
\textup{
A similar conclusion for $\I_\A\otimes b$, namely
  \begin{equation}
    \label{eq:183}
    \pi(\I_\A\otimes b- \rho(\I_\A\otimes b)) D \,\in\, \cL(\HH) \quad
    \forall b\in \B,
  \end{equation}
would follow if 
 $[D, \pi(\I_\A\otimes b)]$ were bounded for any $b$ in $\B$.
 But this is not implied by Definition \ref{deftwist}, as
  illustrated by the twisting of graded spectral triples presented in
  Sect.~\ref{sec-twistgrading}: in \eqref{eq:178} the commutator $[D,
  \pi(\I_\A\otimes b)]$ is unbounded. This means that the
  representation $\pi_\B$ in \eqref{eq:104} cannot serve to build a spectral triple $(\mathcal B, \HH, D)$ whose twist by $\A$ would be $(\A\otimes\B, \HH,
  D;\rho)$.
  }
\end{rem}

We shall say that a minimal twist is trivial whenever $\pi_\B(\B)= \C$ or ---  
assuming $\pi_\B$ is faithful --- when $\B = \C$. 
Condition \eqref{eq:101} then puts a constraint on the type of spectral triples that admit interesting minimal twists: 
the starting representation $\pi_0$ of $\A$ on $\HH$ should be reducible. This comes from the following proposition. 
\begin{prop}
  Let $(\A, \HH, D)$ be a spectral triple with representation $\pi_0$. 
  Assume $\A$ is a $(pre-)$ $C^*$-algebra.
  If $\pi_0$ is irreducible, then any minimal twist is trivial.
\end{prop}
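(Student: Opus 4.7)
The plan is to extract from the representation $\pi$ of $\A\otimes\B$ a representation $\pi_\B$ of $\B$ that must lie in the commutant of $\pi_0(\A)$, and then invoke Schur's lemma for $C^*$-algebras. The minimality condition \eqref{eq:101}, together with the fact that $\pi$ is an algebra homomorphism, essentially closes the argument; no delicate analytic input is required, and the twisted first-order structure plays no role in this particular statement.

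Concretely, I would first observe that in $\A\otimes\B$ the elementary tensors satisfy $(a\otimes\I_\B)(\I_\A\otimes b) = a\otimes b = (\I_\A\otimes b)(a\otimes\I_\B)$. Applying the algebra homomorphism $\pi$ and using the notation \eqref{eq:104}, this yields
\begin{equation*}
\pi_\A(a)\,\pi_\B(b) = \pi_\B(b)\,\pi_\A(a) \qquad \forall a\in\A,\ b\in\B.
\end{equation*}
Invoking now \eqref{eq:101}, namely $\pi_\A=\pi_0$, one concludes that $\pi_\B(\B)$ is contained in the commutant $\pi_0(\A)'\subset\cL(\HH)$.

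I would then invoke Schur's lemma in its $C^*$-algebraic form: for a (pre-)$C^*$-algebra $\A$, a $*$-representation $\pi_0$ on $\HH$ is topologically irreducible if and only if $\pi_0(\A)'=\C\I_\HH$. If $\A$ is only a pre-$C^*$-algebra, the commutant of $\pi_0(\A)$ coincides with that of its norm-closure in $\cL(\HH)$, so this causes no obstruction. Combined with the inclusion above, this forces $\pi_\B(\B)\subseteq \C\I_\HH$, i.e.\ $\pi_\B(\B)=\C$, which is precisely the triviality of the minimal twist in the sense defined just before the proposition. There is no genuine obstacle here: the $(pre-)$$C^*$-algebra hypothesis in the statement is exactly what makes Schur's lemma applicable, and without it one would need an alternative route to conclude that the commutant of $\pi_0(\A)$ reduces to scalars.
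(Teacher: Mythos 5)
Your argument is correct and is essentially the paper's own proof: you use the commutation of $\pi(a\otimes\I_\B)$ with $\pi(\I_\A\otimes b)$ to place $\pi_\B(\B)$ in the commutant $\pi_\A(\A)'=\pi_0(\A)'$, and then conclude by the $C^*$-algebraic Schur lemma (the paper cites Blackadar, Prop.~II.6.1.8, for $\pi_0(\A)'=\C\I$). The only addition is your remark on passing to the norm closure in the pre-$C^*$ case, which is a harmless and correct clarification.
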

\begin{proof}
Let $(\A\otimes\B, \HH, D; \rho)$ with representation $\pi$, be a minimal twist of $(\A, \HH, D)$.  From \begin{equation}
  \label{eq:102}
  \pi(a\otimes b) = \pi(a\otimes \I_\B) \; \pi(\I_\A\otimes b)  =\pi(\I_\A\otimes b) \,\pi(a\otimes \I_\B).
\end{equation}
one has, denoting with $'$ the commutant in $\HH$,
\begin{equation}
  \label{eq:105}
  \pi_\B(\B)\subset \pi_\A(\A)'.
\end{equation}
If  $\pi_0$ is irreducible then $\pi_0(\A)' = \C
\I$ \cite[Prop. II.6.1.8]{blackadar2006}. Hence the result.
\end{proof}

A minimal twist is not the tensor product of $(\A, \HH, D)$ by a spectral triple for $\B$. 
A way to see this is to notice that the twisted commutator $[D, a\otimes b]_\rho$ is not antisymmetric 
in the exchange of its arguments, and so cannot be written as a usual commutator of $a\otimes b$ 
with some operator $D'$. 
Nevertheless one may argue that a minimal twist is somehow a product of
spectral triples where the commutator is then twisted. 
We shall not elaborate much on this here, but only stress that for this to
happen, one needs that the representation $\pi$ of $\A\otimes \B$ on $\HH$ factorizes as the tensor product
\begin{equation}
  \label{eq:109}
  \pi =\tilde\pi_{\A} \otimes \tilde\pi_\B 
\end{equation}
 of
two representations $\tilde \pi_\A, \tilde\pi_\B$ of $\A, \B$ on Hilbert spaces $\HH_\A, \HH_\B$ 
such that $\HH_\A \otimes \HH_\B =\HH$.  
On the other hand, from \eqref{eq:102} the representation $\pi$ is required to be the product
\begin{equation}
  \label{eq:110}
  \pi = \pi_{\A} \, \pi_\B = \pi_\B \, \pi_A
\end{equation}
of two commuting representations of $\A, \B$ on $\HH$ defined in \eqref{eq:104}. 
There is no reason for \eqref{eq:109} and \eqref{eq:110} to be both true at the same time. 

An example where one gets from a representation $\pi_0$ of $\A$ on $\HH$ 
a representation $\pi$ of $\A\otimes \B$ on the same $\HH$ such that \eqref{eq:109} and \eqref{eq:110} both hold, is when
\begin{equation}
\pi_0 = \tilde\pi_\A
\otimes \I_N
\label{eq:107}
\end{equation}
is the direct sum of $N$ copies of an irreducible representation $\tilde\pi_\A$ of $\A$ on an Hilbert space
$\tilde\HH$, and $\B =\IM_N(\C)$. 
Indeed, in this case $\HH$ decomposes as $\widetilde \HH \otimes \C^N$ so that, denoting $\tilde\pi_\B$ the
irreducible representation of $\IM_N(\C)$ on $\C^N$, the representation of $\A\otimes\B$ on $\HH$
\begin{equation}
  \label{eq:108}
  \pi(a\otimes b) := \tilde\pi_\A(a) \otimes \tilde\pi_\B(b)
\end{equation}
  factorizes as in
\eqref{eq:109}. Equation  \eqref{eq:110} holds since 
\begin{equation}
  \label{eq:103}
  \pi_\A(a) := \pi(a\otimes \I_\B) = \tilde\pi_\A(a) \otimes \I_N \qquad \mbox{and} \qquad
\pi_\B(b) = \pi(\I_{\A}\otimes b) = \I_{\tilde\HH}\otimes \tilde\pi_\B(b).
\end{equation}

\subsection{Twist by grading}
\label{sec-twistgrading}

It is not difficult to minimally twist a
spectral triple $(\A, \HH, D)$ in the sense
of Definition \ref{deftwist} as soon as the latter
is graded. One simply splits the Hilbert space
according to the eigenspaces of $\Gamma$, 
\begin{equation}
 \HH=\HH_+ \oplus \HH_-,
\label{eq:17}
 \end{equation}
and consider the representation of $\A \otimes \C^2\ni (a,a')$ given by
\begin{equation}
  \label{eq:15}
  \pi(a,a'):= p_+\pi_0(a) + p_-\pi_0(a') =\begin{pmatrix} \pi_+(a) & 0 \\ 0&
      \pi_-(a') \end{pmatrix}
\end{equation}
where 
\begin{equation}
  \label{eq:50}
  p_+:= \tfrac12(\I + \Gamma),\quad  p_-:= \tfrac12(\I -\Gamma) 
\end{equation}
are projection on the eigenspaces of $\Gamma$, while
 \begin{equation}
  \label{eq:16}
  \pi_+ (a) := p_+ \pi_0(a)_{\lvert\HH_+}, \qquad  \pi_- (a) := p_- \pi_0(a) _{\lvert\HH_-}
\end{equation}
are the restrictions on $\HH_\pm$ of the representation of $\A$ on
 $\HH$. 
 
\begin{prop} \label{proptwist} 
Let $(\A, \HH, D), \Gamma$ be a graded spectral triple. Then
\begin{equation}
(\A\, \otimes\, \C^2, \HH, D\,;\, \rho)
\label{eq:58}
\end{equation}
with
representation \eqref{eq:15} and automorphism 
   \begin{equation}
     \label{eq:18}
     \rho(a,a') := (a',a), \quad \forall (a,a')\in\A\otimes \C^2
   \end{equation}
is a minimal twist of $(\A, \HH, D)$ with grading $\Gamma$.  
\end{prop}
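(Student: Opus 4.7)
The plan is to verify the three requirements of Definition~\ref{deftwist} one by one, after which the grading claim follows from the lemma preceding the proposition. The easy preliminaries: since $\Gamma$ commutes with $\pi_0(\A)$, so do the spectral projections $p_\pm=\tfrac{1}{2}(\I\pm\Gamma)$, and hence $\pi(a,a')=p_+\pi_0(a)+p_-\pi_0(a')$ is an algebra $*$-homomorphism of $\A\otimes\C^2$ into $\cL(\HH)$. The minimality condition \eqref{eq:46} reads $\pi(a,a)=(p_++p_-)\pi_0(a)=\pi_0(a)$, which is immediate. The exchange map $\rho(a,a')=(a',a)$ is an involutive $*$-automorphism of $\A\otimes\C^2$, so $\rho^{-1}=\rho$ and the regularity condition \eqref{star-1} trivially holds.

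The heart of the argument is the boundedness of the twisted commutator, and this is where the grading is used in an essential way. The anticommutation $D\Gamma=-\Gamma D$ gives $Dp_\pm=p_\mp D$, so
\begin{align*}
[D,\pi(a,a')]_\rho
&= D\bigl(p_+\pi_0(a)+p_-\pi_0(a')\bigr) - \bigl(p_+\pi_0(a')+p_-\pi_0(a)\bigr)D \\
&= p_-\bigl(D\pi_0(a)-\pi_0(a)D\bigr) + p_+\bigl(D\pi_0(a')-\pi_0(a')D\bigr) \\
&= p_-\,[D,\pi_0(a)] + p_+\,[D,\pi_0(a')] ,
\end{align*}
which is bounded because each $[D,\pi_0(a)]$ is bounded by hypothesis on $(\A,\HH,D)$. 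In the language of Lemma~\ref{lemmacompact}, swapping the two copies of $\A$ precisely cancels the obstruction $\pi(x-\rho(x))D$ that would otherwise force the twist to be trivial.

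To conclude that $\Gamma$ is still a grading of the twisted triple, I invoke the lemma preceding the proposition: it suffices to check $[\Gamma,\pi(\I_\A\otimes b)]=0$ for every $b=(b_1,b_2)\in\C^2$, and indeed $\pi(\I_\A,(b_1,b_2))=b_1 p_+ + b_2 p_-$ is a polynomial in $\Gamma$, so commutes with it. I do not foresee any real obstacle; the only conceptual point is recognising that $D\Gamma=-\Gamma D$ converts a twisted commutator of operators built from $\pi_0(\A)$ and $\Gamma$ into a sum of ordinary commutators weighted by the spectral projections of $\Gamma$. This is precisely what makes the grading a natural source of a non-trivial minimal twist for any graded spectral triple.
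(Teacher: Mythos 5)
Your proof is correct and is essentially the paper's own argument in a different notation: the identity $Dp_\pm=p_\mp D$ and the resulting formula $[D,\pi(a,a')]_\rho=p_-[D,\pi_0(a)]+p_+[D,\pi_0(a')]$ are exactly the paper's block-matrix computation \eqref{eq:20}--\eqref{eq:21}, where the two off-diagonal entries are the restrictions of the ordinary commutators $[D,\pi_0(a)]$ and $[D,\pi_0(a')]$. Your extra checks (that $\pi$ is a $*$-homomorphism, condition \eqref{eq:46}, regularity \eqref{star-1} of the flip, and the grading via $[\Gamma,\pi(\I_\A\otimes b)]=0$) match what the paper states or leaves implicit, so there is nothing to object to.
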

\begin{proof}  In agreement with \eqref{eq:46}, one retrieves the initial
  representation of $\A$ on $\HH$ as
  \begin{equation}
    \label{eq:113}
    \pi(a, a) = p_+ {\pi_0}(a) + p_- {\pi_0}(a) =  \begin{pmatrix} \pi_+(a) &0 \\ 0&  \pi_-(a)\end{pmatrix}.
  \end{equation}
Since $D$ anticommutes with $\Gamma$, on ${\HH}_+\oplus \HH_-$ it is of the form 
  \begin{equation}
    \label{eq:19}
    D=\begin{pmatrix} 0 & \D \\ \D^\dagger & 0\end{pmatrix}
  \end{equation}
where $\D$ is the restriction of $D$ to $\HH_-$, with image in
$\HH_+$.  Thus by \eqref{eq:15}
\begin{equation}
  \label{eq:20}
  [D,  \pi(a, a')]_\rho = \begin{pmatrix} 0& \D \pi_-(a') -
      \pi_+(a') \D  \\
        \D^\dagger \pi_+(a) - \pi_-(a) \D^\dagger&0 \end{pmatrix}. 
\end{equation}
The lower-left term in \eqref{eq:20} is the restriction to $\HH_+$ of the usual commutator
\begin{equation}
  \label{eq:21}
  [D, \pi(a, a)]=\begin{pmatrix}  0 &\D  \pi_-(a) -
      \pi_+(a) \D  \\
        \D^\dagger \pi_+(a) - \pi_-(a) \D^\dagger&0\end{pmatrix},
\end{equation}
which is bounded since  $(\A, \HH, D)$ is a spectral triple.  
Similarly, 
the upper-right term in \eqref{eq:20} is bounded, being the restriction of $[D, \pi(a', a')]$ to $\HH_-$. 
Hence \eqref{eq:20} is bounded and thus \eqref{eq:58} is a twisted spectral triple.

Since $[\Gamma, \pi(a, a')]=0$, and $\left\{\Gamma, D\right\}=0$ by hypothesis, the spectral triple \eqref{eq:58} 
is $\Gamma$-graded.
\end{proof}

It is easy to see that the flip automorphism \eqref{eq:18} is implemented on the Hilbert space by exchanging the components 
$\psi_{\pm} \in \HH_{\pm}$, that is, for all $\alpha\in \A\otimes \C^2, $
\begin{equation}
\label{flip}
\pi(\rho(\alpha)) = U_\rho \pi(\rho(\alpha)) U_\rho^*, \quad \mbox{with} 
\quad 
U_\rho \begin{pmatrix}
\psi_{+} \\ \psi_{-}
\end{pmatrix} = \begin{pmatrix}
\psi_{-} \\ \psi_{+}
\end{pmatrix} . 
\end{equation} 
Notice that we do not need to assume that $\dim \HH_{+} = \dim
\HH_{-}$.
To stress the role of $\rho$, compare the expression of $[D, \pi(a, a')]_\rho$ in \eqref{eq:20} with the usual commutator 
\begin{equation}
  [D, \pi(a, a')]=\begin{pmatrix}  0 &\D  \pi_-(a') - \pi_+(a) \D  \\
        \D^\dagger \pi_+(a) - \pi_-(a') \D^\dagger&0\end{pmatrix}.
\end{equation}
While the boundedness of the twisted commutator $[D, \pi(a, a')]_\rho$ follows from the boundedness of  
$[D, \pi(a,a)]$ and $[D, \pi(a', a')]$, there is no reason for the commutator $[D, \pi(a,a')]$ to be bounded.
This is also true for the commutator of $D$ with the
  representation $\pi_\B$ in \eqref{eq:104},  as pointed out in Remark
  \ref{rem3.5}. For
$b=(z_1, z_2)\in\C^2$, one has 
\begin{equation}
  \label{eq:178}
  [D, \pi_\B(b)] = [D, \pi(\I_\A\otimes b)]= 
\left(\begin{array}{cc}
    z_1\I & 0 \\ 0& z_2\I
\end{array}\right)\, 
\left(\begin{array}{cc}
0 & \mathcal D \\ \mathcal D^\dagger &0\end{array}\right) = \left(\begin{array}{cc}
0 & (z_1-z_2)\mathcal D \\ (z_2-z_1)\mathcal D^\dagger &0\end{array}\right),
\end{equation}
which is bounded if and only if $z_2 = z_1$.

The twist-by-grading of Proposition \ref{proptwist} passes to the real structure.

  \begin{prop} \label{prop:twistreal} 
Let $(\A, \HH, D)$ be a graded real spectral triple, with grading $\Gamma$ and 
real structure $J$. Then the twisted spectral $(\A\otimes\C^2, \HH,
    D;\, \rho)$ of Proposition \ref{proptwist} is a graded real twisted
    spectral triple with the same real structure $J$ and the same $KO$-dimension. 
  \end{prop}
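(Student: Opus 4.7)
My plan is to verify each of the items in Definition \ref{def:realtwist}, exploiting the fact that $J$, $D$ and $\Gamma$ are unchanged from the original triple. The three sign relations $J^2=\epsilon(k)$, $JD=\epsilon'(k)DJ$, $J\Gamma=\epsilon''(k)\Gamma J$ are then inherited verbatim, so the $KO$-dimension is the same; and Proposition \ref{proptwist} already gives that $\Gamma$ grades $(\A\otimes\C^2, \HH, D;\rho)$. What remains is the zero-order condition and the twisted first-order condition for the enlarged algebra.

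The kinematic backbone of both verifications is the following. Since $\Gamma$ commutes with $\pi_0(\A)$, so do the spectral projections $p_\pm = \tfrac12(\I\pm\Gamma)$; and a short check using $J\Gamma J^{-1}=\epsilon''\Gamma$ shows that $J\pi_0(c^*)J^{-1}$ also commutes with $\Gamma$, hence with $p_\pm$, irrespective of the sign $\epsilon''$. On the other hand $Dp_\pm = p_\mp D$, while conjugation by $J$ itself fixes or swaps the projections according to whether $\epsilon''=+1$ or $\epsilon''=-1$. The two signs will thus produce two parallel sub-cases, but the final cancellations look the same.

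For the zero-order condition $[\pi(a,a'), J\pi(b,b')^*J^{-1}]=0$ I would expand both operators in the $p_\pm$-decomposition, abbreviate $\beta_i := J\pi_0(b_i^*)J^{-1}$, and use $p_+p_-=0$ together with $[\pi_0(a),\beta_i]=0$ (the zero-order condition of the original triple) to match the two products term by term. For the twisted first-order condition I would first simplify, using $Dp_\pm = p_\mp D$ and the flip $\rho(a,a')=(a',a)$,
\[
[D, \pi(a, a')]_\rho \;=\; p_-[D,\pi_0(a)]\, +\, p_+[D,\pi_0(a')].
\]
Writing $d := [D,\pi_0(a)]$ and $d' := [D,\pi_0(a')]$, each of $d, d'$ anticommutes with $\Gamma$, so $p_\pm d = d p_\mp$ and similarly for $d'$. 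The outer bracket $[\,\cdot\,,\,J\pi(b,b')^*J^{-1}\,]_{\rho^\circ}$ then expands into a handful of terms in which the first-order condition of the original triple, $[d,\beta_i]=0=[d',\beta_i]$, yields the required cancellation.

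The main subtlety I anticipate is keeping track of $\epsilon''$: when $\epsilon''=-1$, conjugation by $J$ swaps the two ``components'' $\beta_1$ and $\beta_2$ that appear inside $J\pi(b,b')^*J^{-1}$. This is exactly the swap produced on the other side of the twisted commutator by the flip built into $\rho^\circ$, so the bookkeeping matches in both sign cases and the twisted first-order condition follows uniformly.
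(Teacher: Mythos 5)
Your proposal is correct and follows essentially the same route as the paper's proof: decompose everything along the eigenspaces of $\Gamma$, treat separately the two cases $J\Gamma=\pm\Gamma J$, and reduce the zero-order and twisted first-order conditions of $(\A\otimes\C^2,\HH,D;\rho)$ to the untwisted ones of $(\A,\HH,D)$, noting that the swap induced by $J$ when $\epsilon''=-1$ is compensated by the flip in $\rho^\circ$. The only difference is cosmetic: you phrase the computation with the projections $p_\pm$ and the identity $[D,\pi(a,a')]_\rho=p_-[D,\pi_0(a)]+p_+[D,\pi_0(a')]$, whereas the paper writes the same cancellations in explicit $2\times2$ block-matrix form.
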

  \begin{proof}
The operators $\Gamma, D$ and $J$ are unchanged by the twisting, so the
$KO$-dimension is not modified by passing from $(\A, \HH, D)$ to  $(\A\otimes\C^2, \HH,
    D;\, \rho)$.  One simply needs to check the zero-order and the twisted first-order
    condition \eqref{eq:39}, for both possibilities that $J$ commutes or anticommutes with $\Gamma$, 
    depending on the $KO$-dimension. Notice that the explicit value of the $KO$-dimension and the additional relations implied by this value do not play any role in the proof. 
    
    Since the automorphism $\rho$ in \eqref{eq:18} is just the flip, 
    one has $\rho^2=\id$ and $\rho$ coincides with its inverse.
Assume first that $J$ commutes with
    $\Gamma$. On $\HH=\HH_+ \oplus \HH_-$ one has
    \begin{equation}
      \label{eq:4101}
      J =\begin{pmatrix} J_+ &0 \\ 0& J_-\end{pmatrix}.
    \end{equation}
    For any $(a, \alpha)$, $(b,\beta)$ in $\A\otimes\C^2$ we write
$A=\pi(a,\alpha)$, $B=\pi(b,\beta)$, that is 
$$  A= \begin{pmatrix} a_+ &0
      \\ 0& \alpha_-\end{pmatrix},\quad
 \rho(A) = \begin{pmatrix} \alpha_+ &0
      \\ 0& a_-\end{pmatrix},\quad
   J B^* J^{-1}= \begin{pmatrix} b_+^\circ &0
      \\ 0& \beta_-^\circ \end{pmatrix},\quad
  J \rho(B^*) J^{-1}= \begin{pmatrix} \beta_+^\circ &0
      \\ 0& b_-^\circ \end{pmatrix}
$$
where
\begin{equation}
a_\pm:=\pi_\pm(a),\quad b_\pm^\circ := J_\pm\pi_\pm(b^*) J_\pm^{-1}\label{eq:84}
\end{equation}
and similarly for $\alpha$ and $\beta$. The zero-order condition amounts to 
\begin{equation}
  \label{eq:106}
  [a_+, b_+^\circ] = [\alpha_-, \beta_-^\circ] =0,
\end{equation}
which follows from the zero-order condition for $(\A, \HH, D)$, namely
\begin{equation}
[(a,a) , (b^\circ, b^\circ)] = [(\alpha, \alpha), (\beta^\circ, \beta^\circ)]= 0.
\label{eq:111}\end{equation}
For the twisted first-order condition, one uses \eqref{eq:20} to get
\begin{align}
  [D, A]_\rho\, JB^*J^{-1} &= 
\begin{pmatrix} 
    0& (\D \alpha_- - \alpha_+ \D)   \beta_-^\circ  \nonumber \\
    (\D^\dagger a_+ -  a_- \D^\dagger)  b_+^\circ&0 
  \end{pmatrix}
\\ ~\\  
J\rho(B^*)J^{-1} [D, A]_\rho& = 
   \begin{pmatrix} 
    0&  \beta_+^\circ  (\D \alpha_- - \alpha_+ \D) \\
    b_-^\circ  (\D^\dagger a_+ -  a_- \D^\dagger)  &0 
  \end{pmatrix} . \nonumber
\end{align}
The lower-left component of
\begin{equation}
\big[ [D,A]_\rho, JB^*J^{-1} \Big]_{\rho^\circ} =  [D, A]_\rho\, JB^*J^{-1}  -
J\rho(B^*)J^{-1}[D, A]_\rho\label{eq:97}
\end{equation} 
using \eqref{eq:21} is the lower-left component of:
\begin{align*}
 & \big[[D, (a, a)], (b^\circ, b^\circ)\big] = 
 \begin{pmatrix}
0 & \hspace{-.75truecm}(\D a_- -a_+\D)b_-^\circ -b_+^\circ(\D
a_- - a_+\D)\\
 (\D^\dagger a_+ -a_-\D^\dagger) b_+^\circ - b_-^\circ(\D^\dagger a_+ - a_-\D^\dagger)  &0
\end{pmatrix}
\end{align*}
which vanishes since $(\A, \HH, D)$ satisfies the first-order
condition. Similarly the upper-right component of \eqref{eq:97}
vanishes, being the upper right component of $\big[ [D,(\alpha,\alpha) \big], (\beta^\circ, \beta^\circ) ]$.
Hence the twisted first-order condition is satisfied.

When $J$ anticommutes with $\Gamma$, one has
\begin{equation}
      \label{eq:41}
      J =\begin{pmatrix} 0&\J \\ \epsilon \,\J^{-1}& 0\end{pmatrix},
    \end{equation}
so that, writing now $b_+^\circ := \J^{-1}\pi_+(b^*)\J$, and $\beta_-^\circ := \J\pi_-(\beta^*)\J^{-1}$,
\begin{equation}
  \label{eq:49}
  J B^* J^{-1}= \begin{pmatrix} \beta_-^\circ &0
      \\ 0& b_+^\circ\end{pmatrix}, \qquad  J \rho(B^*) J^{-1}= \begin{pmatrix} b_-^\circ &0
      \\ 0& \beta_+^\circ \end{pmatrix} .
\end{equation}
The proof is then similar to the previous case.
\end{proof}

Propositions \ref{proptwist} and \ref{prop:twistreal} give a way to minimally twist a (real) graded spectral triple using its grading. This needs not be the only possibility, although this happens to be the case for an
even dimensional manifold, as showed in Proposition \ref{prop-dim-4}.
In particular, while it is important for the construction that the grading $\Gamma$ commutes with the algebra (otherwise there would be no guarantee that the
restrictions $p_\pm \A$ of $\A$ to $\HH_\pm$  are algebra
representations, unless the $p_\pm$ themselves are elements of the algebra),
the condition that $\Gamma$ anticommutes with the Dirac operator may
be slightly relaxed, as illustrated later on in Sect.~\ref{sec-almostcomm}. 
More precisely, given a minimal twist $(\A\otimes \C^2, \HH, D; \rho)$ of a spectral triple $ T:=(\A, \HH, D)$,
the extended representation $\pi$ can alway be written  
as in \eqref{eq:15} with a suitable unique grading of the Hilbert space $\HH$. Indeed, by defining 
  \begin{equation}
    \label{eq:114}
    \widetilde\Gamma :=  \pi(\I_\A\otimes (1,-1)) = \pi(\I_\A, -\I_\A) , 
  \end{equation}
a direct computation leads to 
  \begin{equation}
    \label{eq:211}
    \pi(a,a') = \tfrac12(\I +\widetilde\Gamma)\,\pi_0(a) + \tfrac12(\I
    -\widetilde\Gamma)\,\pi_0(a') \quad\quad \forall (a,a')\in\A\otimes\C^2. 
  \end{equation}
Clearly, the operator $\widetilde\Gamma$ defined in \eqref{eq:114} is a grading of $\HH$, that is
$\widetilde\Gamma^*=\widetilde\Gamma$ and $\widetilde\Gamma^2=\I$. It trivially commutes with the representation $\pi_\B$ in \eqref{eq:104}, in fact the latter can be written as 
\begin{equation}
  \label{eq:116}
  \pi_B(z_1, z_2) = \tfrac12(z_1+z_2) \I + \tfrac12(z_1-z_2)\widetilde\Gamma 
  \quad\quad \forall (z_1, z_2)\in\C^2.
\end{equation}
It also commutes with the representation $\pi_0$, since
  \begin{equation}
    \label{eq:203}
    [\widetilde\Gamma, \pi_0(a)]= [\pi(\I_\A\otimes (1,-1)), \pi(a\otimes\I_\B)]=0 .
  \end{equation}
Thus from \eqref{eq:110} it also commutes with $\pi=\pi_0\pi_\B$.
However, $\widetilde\Gamma$ needs not be a grading of the spectral triple, for $\widetilde\Gamma$
may fail to anticommute with $D$.
 If it does, the twisted spectral triple  $(\A\otimes \C^2, \HH, D; \rho)$ is the ``twist by grading'' of the starting spectral triple $(\A, \HH, D)$, obtained by applying Proposition \ref{proptwist} with
$\Gamma=\widetilde\Gamma$. 
Otherwise the minimal twist does not come from the construction of Proposition \ref{proptwist}. 
We come back to this point for the case of almost commutative geometry later on.
\begin{rem}
  \textup{
  There is in fact a constraint on the anticommutator of
    $\widetilde\Gamma$ with $D$ coming from the boundedness of 
    $[D, a]_\rho$. From \eqref{eq:18} one notices that 
 $\rho(\I_\A\otimes (1,-1)) = -\I_\A\otimes (1,-1)$,
so that
    \begin{align}
      \label{eq:26}
      [D, \pi(\I_\A, 0)]_\rho &= \tfrac 12[D,\, \I +\widetilde\Gamma]_\rho =\tfrac12\left(
      D\,\pi\left(\I_\A\otimes (1,-1)\right) + \pi\left(\I_\A\otimes(1,-1)\right)\,D\right) \nonumber \\  
      & =
    \tfrac 12 \, \big(D \, \widetilde\Gamma + \widetilde\Gamma \, D \big) .
    \end{align}
Hence, in any minimal twist $(\A\otimes \C^2, \HH, D; \rho)$ with
    $\rho$ as in \eqref{eq:18}, the anticommutator 
    $\big\{D, \widetilde\Gamma\big\}$ is a bounded operator.
      }  
\end{rem}

\section{Unicity of the twist}
\label{sec:app}

We show in Sect.~\ref{sub:secevendim} that twisting by grading as described in
Sect.~\ref{sec-twistgrading} is the only way to minimally
twist an even dimensional spin manifold. With some conditions of
irreducibility, the same is true for almost commutative geometries as
soon as one uses the real structure, as shown in Sect.~\ref{sec-almostcomm}.
\subsection{Even dimensional manifold}
\label{sub:secevendim}
Let $\M$ be a closed spin manifold of even dimension $n=2m$, $m \geq 1$. The
Euclidean Dirac matrices $\gamma_{[2m]}$ in the chiral basis are the $p:=2^m$
dimensional square
matrices defined recursively
by
\begin{equation}
  \label{eq:133}
  \gamma_{[2]}^1=\sigma_1\quad \gamma^2_{[2]}= \sigma_2\quad
  \gamma_{(2)} = -\ii  \gamma_{[2]}^1\, \gamma_{[2]}^2 = \sigma_3
\end{equation}
 where  
 \begin{equation}
  \label{eq:2}
  \sigma_1 = \begin{pmatrix} 0 & 1\\ 1&0 \end{pmatrix},\quad
    \sigma_2= \begin{pmatrix} 0& - \ii \\ \ii & 0 \end{pmatrix},\quad
      \sigma_3 = \begin{pmatrix} 1& 0 \\ 0 & -1\end{pmatrix}
\end{equation}
 are the Pauli matrices, and 
\begin{align}
  \label{eq:115}
  \gamma^k_{[2m+2]} &=
\begin{pmatrix} 
  0_{2m} & \gamma^k_{[2m]}  \nonumber \\ 
  \gamma^k_{[2m]} & 0_{2m}  
\end{pmatrix}\quad \text{ for } k=1, ..., 2m\\[4pt]
 \gamma^{2m+1}_{[2m+2]} &=
 \begin{pmatrix} 
 0_{2m} & \gamma_{(2m)} \\ 
 \gamma_{(2m)} & 0_{2m}  
\end{pmatrix},\quad 
 \gamma^{2m+2}_{[2m+2]} =
 \begin{pmatrix} 
 0_{2m} & -\ii \, \I_{2m}\\ 
 \ii \, \I_{2m} &0_{2m}  \end{pmatrix}
\end{align}
where $\gamma_{(2m)}$ is the grading operator 
\begin{equation}
  \label{eq:82}
\gamma_{(2m)}:= (-\ii)^{m} \, \gamma_{[2m]}^1\,\gamma_{[2m]}^2 \cdots
\gamma_{[2m]}^{2m} =\left(\begin{array}{cc} \I_{2m}& 0_{2m}\\ 0_{2m} & -\I_{2m}\end{array}\right).  
\end{equation}

\begin{lemma}
\label{commutator-Dirac-matrices}
Let $A,B\in \IM_{2^{m+1}}(\C)$, be such that
\begin{equation}
\gamma^\mu_{[2m+2]} \,A= B\,\gamma^\mu_{[2m+2]} \quad\forall\mu=1, ... , m+2.
\label{eq:117}
\end{equation}
Then, there exist $\lambda, \lambda'\in\C$ such that 
\begin{equation}
A= \begin{pmatrix} \lambda\, \I_{2^m} & 0_{2^m}\\[8pt] 0_{2^m}&
    \lambda'\,\I_{2^m} \end{pmatrix}, \qquad B= \begin{pmatrix}
    \lambda'\, \I_{2^m} & 0_{2^m}\\[8 pt] 0_{2^m}&
    \lambda\,\I_{2^m} \end{pmatrix}.
\label{eq:4}
\end{equation}
\end{lemma}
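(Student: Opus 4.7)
The plan is to work block-by-block with respect to the recursive block structure \eqref{eq:115}. Write $A,B\in\IM_{2^{m+1}}(\C)$ in $2\times 2$ blocks of size $2^m$:
\begin{equation*}
A=\begin{pmatrix} A_{11} & A_{12} \\ A_{21} & A_{22}\end{pmatrix},\qquad
B=\begin{pmatrix} B_{11} & B_{12} \\ B_{21} & B_{22}\end{pmatrix}.
\end{equation*}
(I assume the bound $\mu = 1,\dots, m+2$ in \eqref{eq:117} is a typo for $\mu=1,\dots,2m+2$.) The strategy is to exploit the three types of matrices appearing in \eqref{eq:115} separately: $\gamma^{2m+2}_{[2m+2]}$ which is proportional to off-diagonal identity blocks, $\gamma^k_{[2m+2]}$ for $k\leq 2m$ which have $\gamma^k_{[2m]}$ off-diagonal, and $\gamma^{2m+1}_{[2m+2]}$ which has $\gamma_{(2m)}$ off-diagonal.

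First I would plug $\mu=2m+2$ into \eqref{eq:117}. Since this matrix is $\bigl(\begin{smallmatrix} 0 & -\ii\I \\ \ii\I & 0\end{smallmatrix}\bigr)$, equating $\gamma^{2m+2}_{[2m+2]}A$ with $B\gamma^{2m+2}_{[2m+2]}$ block-by-block immediately yields $B_{11}=A_{22}$, $B_{22}=A_{11}$, $B_{12}=-A_{21}$, $B_{21}=-A_{12}$. This already eliminates $B$ from the problem.

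Next, substituting into \eqref{eq:117} for $k=1,\dots,2m$ and reading off the four blocks, I would obtain
\begin{equation*}
[\,A_{11},\,\gamma^k_{[2m]}\,]=0,\quad [\,A_{22},\,\gamma^k_{[2m]}\,]=0,\quad \{A_{12},\gamma^k_{[2m]}\}=0,\quad \{A_{21},\gamma^k_{[2m]}\}=0.
\end{equation*}
Doing the same with $\mu=2m+1$, whose off-diagonal block is $\gamma_{(2m)}$, gives the \emph{anti}commutation $\{A_{12},\gamma_{(2m)}\}=\{A_{21},\gamma_{(2m)}\}=0$. But $\gamma_{(2m)}$ is proportional to $\gamma^1_{[2m]}\cdots\gamma^{2m}_{[2m]}$, and since $A_{12}$ anticommutes with each of the $2m$ factors, it must \emph{commute} with their product (the sign being $(-1)^{2m}=+1$). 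Together with the anticommutation just derived and the invertibility of $\gamma_{(2m)}$ (which squares to $\I$), this forces $A_{12}=0$, and likewise $A_{21}=0$.

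Finally, the diagonal blocks $A_{11},A_{22}$ commute with every one of the generators $\gamma^1_{[2m]},\dots,\gamma^{2m}_{[2m]}$ of the complex Clifford algebra $\mathrm{Cl}(2m,\C)\cong\IM_{2^m}(\C)$. Since this representation is irreducible (equivalently, these matrices generate the full matrix algebra), Schur's lemma gives $A_{11}=\lambda\,\I_{2^m}$ and $A_{22}=\lambda'\,\I_{2^m}$ for some $\lambda,\lambda'\in\C$. Combined with the relations between $A$ and $B$ from the first step, this delivers exactly \eqref{eq:4}. The only delicate point is the sign bookkeeping in step three showing that the $\mu=2m+1$ condition is incompatible with the $\mu\leq 2m$ conditions on the off-diagonal blocks, which is where the parity of the dimension enters crucially.
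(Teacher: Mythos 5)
Your proof is correct and follows essentially the same route as the paper's: the same $2\times 2$ block decomposition, using $\mu=2m+2$ to express $B$ in terms of $A$, then the remaining gamma matrices to force the off-diagonal blocks to anticommute with all $\gamma^k_{[2m]}$ and with $\gamma_{(2m)}$ (hence vanish) and the diagonal blocks to be scalars. You merely spell out the parity/Schur details that the paper leaves implicit, and your reading of the index range as $\mu=1,\dots,2m+2$ matches the paper's intent.
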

\begin{proof}
Let
  \begin{equation}
    \label{eq:7}
    A=\begin{pmatrix} a & b
        \\ c & d \end{pmatrix},\qquad   
    B=\begin{pmatrix} a' & b'
        \\ c'& d' \end{pmatrix} , 
  \end{equation}
be non zero matrices whose entries are $2^m$-square matrices. 
For $\mu=2m+2$ requiring \eqref{eq:117} implies
\begin{align}
b' = -c , \quad c' = -b , \quad \mbox{and} \quad 
a' = d , \quad d' = a .
\label{eq:119}
\end{align}
Then, for $k=1, ... , 2m+1$, one obtains  
\begin{align}
  \label{eq:8}
\gamma^k_{[2m]} a = a \gamma^k_{[2m]} , \quad
\gamma^k_{[2m]} d = d \gamma^k_{[2m]} \qquad
\mbox {and} \qquad
\gamma^k_{[2m]} c = -c \gamma^k_{[2m]} , \quad
\gamma^k_{[2m]} b = -b \gamma^k_{[2m]}\end{align} 
and similar relations with $\gamma_{(2m)}$. 
Thus $b$ and $c$ should anticommute with all the
$\gamma^k_{[2m]}$ as well as with their product $\gamma_{(2m)}$,
which is not possible, unless $b=c=0$. Meanwhile $a$ and $d$ should commute
with all the $\gamma^k_{[2m]}$, which is possible only if $a$
and $d$ are multiple of the identity. 
Hence
the result.  
\end{proof}

The twist by grading of Sect.~\ref{sec-twistgrading} turns out to be the only way to minimally twist the spectral triple
of a manifold  \eqref{eq:1} by a finite dimensional algebra, provided
the latter acts faithfully.
\begin{prop}
\label{prop-dim-4}
 Let $\M$ be a closed manifold of dimension $2m$; $\B$ be a finite dimensional $C^*$-algebra
 and $\rho$ a non-trivial automorphism of
  $\cinf\otimes \B$ such that
  \begin{equation}
(\cinf\otimes \B,
  L^2(\M,S),\, \ds\,;\; \rho)\label{eq:23}
  \end{equation}
 is a minimal twist of the canonical triple $(\cinf, L^2(\M, S), \ds)$, 
 with $\pi_\B$ as defined in \eqref{eq:104} taken to be faithful. Then $\B=\C^2$
and \begin{equation}
\rho(f,g) = (g,f)  \quad\forall(f,g)\in\cinf\otimes \C ^2.
\label{eq:47}
  \end{equation}
Moreover the
 representation $\pi$ of $\cinf \otimes {\B}$ on $L^2(\M,S)$ is
 given by \eqref{eq:15}-\eqref{eq:16} with $\Gamma=\gamma_{(2m)}$ the grading of the
 canonical spectral triple of $\M$ in \eqref{eq:1}.
\end{prop}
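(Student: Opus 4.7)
First, since $\pi_\A=\pi_0$ acts on $L^2(\M,S)$ by pointwise multiplication and commutes with $\pi_\B(\B)$ by \eqref{eq:102}, each $\pi_\B(b)$ lies in the commutant of $\pi_\M(\cinf)$ and so is pointwise multiplication by an endomorphism field $x\mapsto M_b(x)\in\mathrm{End}(S_x)$; the same holds for $\pi(\rho(a\otimes b))$, with field $A_{a,b}(x)$. By Lemma \ref{lemma:aut} the operator $\pi_0(a)-\pi(\rho(a\otimes\I_\B))$ is compact, but being itself a multiplication operator on $L^2(\M,S)$ over a compact manifold it must vanish; combined with faithfulness of $\pi$ this gives $\rho(a\otimes\I_\B)=a\otimes\I_\B$ for every $a\in\cinf$. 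Thus $\rho$ is the identity on the $\cinf\otimes\I_\B$ factor and the whole twist lives in $\B$.

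Second, a direct expansion
\[
[\ds,\pi(a\otimes b)]_\rho\,\psi\;=\; -\ii\gamma^\mu\bigl[(\partial_\mu a)M_b + a\,\nabla_\mu M_b\bigr]\psi\;-\;\ii\bigl[a(x)\gamma^\mu M_b(x)-A_{a,b}(x)\gamma^\mu\bigr]\nabla_\mu\psi
\]
has a bounded zeroth-order piece, while the coefficient of $\nabla_\mu$ is a first-order symbol which can only be bounded by vanishing pointwise. Specializing to $a\equiv 1$ on a chart gives $\gamma^\mu M_b(x)=A_{1,b}(x)\gamma^\mu$ for every $x$ and every $\mu$. This is exactly the hypothesis of Lemma \ref{commutator-Dirac-matrices}, whose conclusion (after the index shift $m\mapsto m-1$ to land at manifold dimension $2m$) is that, in the chirality basis in which $\gamma_{(2m)}=\mathrm{diag}(\I,-\I)$,
\[
M_b(x)\,=\,\mu_+(x,b)\,p_+\,+\,\mu_-(x,b)\,p_- \qquad\mbox{and}\qquad A_{1,b}(x)\,=\,\mu_-(x,b)\,p_+\,+\,\mu_+(x,b)\,p_- ,
\]
with $p_\pm=\tfrac12(\I\pm\gamma_{(2m)})$; in particular $\pi_\B(b)$ preserves each chirality subbundle and acts by a scalar on it.

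Third, for each $x$ the map $b\mapsto\mu_\pm(x,b)$ is a unital $*$-homomorphism $\B\to\C$, i.e., a character of the finite-dimensional $C^*$-algebra $\B$; since $\B$ admits only finitely many such characters, continuity of $M_b(x)$ in $x$ and connectedness of $\M$ force these characters to be constant in $x$. Then $b\mapsto(\mu_+(b),\mu_-(b))$ is a $*$-homomorphism $\B\to\C^2$, injective by faithfulness of $\pi_\B$, so $\B\hookrightarrow\C^2$; the case $\B=\C$ would make $\rho$ trivial, so $\B=\C^2$. Combining with the first step, $\rho$ fixes $\cinf\otimes\I_\B$ and swaps the two chirality components on $\I_\A\otimes\C^2$, which is precisely the flip \eqref{eq:47}, and $\pi$ takes the form \eqref{eq:15}--\eqref{eq:16} with $\Gamma=\gamma_{(2m)}$.

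The main obstacle is the symbol argument in the second step: passing from the operator-level boundedness of $[\ds,\pi(a\otimes b)]_\rho$ to the pointwise algebraic identity $\gamma^\mu M_b=A_{1,b}\gamma^\mu$ requires enough regularity of the field $M_b(x)$ for $\partial_\mu M_b$ to be a well-defined bounded multiplication, together with a local argument isolating the highest-order piece. Secondary technicalities are the re-indexing of Lemma \ref{commutator-Dirac-matrices} to the manifold dimension $2m$, and, in the third step, the passage from finite-valued continuous to genuinely constant character fields on a connected base.
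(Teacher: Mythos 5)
Your argument is correct and is essentially the paper's own: boundedness of the twisted commutator forces the pointwise intertwining relation $\gamma^\mu\,\pi_\B(b)=\pi(\rho(\I_\A\otimes b))\,\gamma^\mu$, and Lemma~\ref{commutator-Dirac-matrices} (with the dimension shift you note) then yields $\B=\C^2$, the flip automorphism \eqref{eq:47} and $\widetilde\Gamma=\pm\gamma_{(2m)}$, hence the representation \eqref{eq:15}--\eqref{eq:16}. Your two extra steps --- invoking Lemma~\ref{lemma:aut} together with the vanishing of compact multiplication operators to show that $\rho$ fixes $\cinf\otimes\I_\B$, and the character/connectedness argument replacing the paper's outright assertion that $\pi_\B(b)$ acts as a constant matrix --- are refinements of, not departures from, that route.
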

\begin{proof}
Let $\I_\M$ denote the identity of $\cinf$. Any element $\I_\M\otimes b$ acts on $\HH$ as a constant
matrix
\begin{equation}
B:=\pi(\I_\M\otimes b)=\pi_\B(b)\label{eq:138}
\end{equation}
of dimension at most
$2^m$. Thus $\pi_\B(\B)$ is a subalgebra of $\IM_{2^m}(\C)$, and
since $\pi_\B$ is
faithful the same is true for the algebra $\B$.
For any $b\in\B$, one finds for the twisted commutator
  \begin{equation}
    \label{eq:3}
    [\ds, B]_\rho = - \ii \gamma^\mu[\omega_\mu, B] - \ii [\gamma^\mu,  B]_\rho\nabla_\mu , 
  \end{equation}
using $\ds = -\ii \gamma^\mu \nabla_\mu = -\ii \gamma^\mu(\partial_\mu + \omega_\mu)$. 
By a similar argument as below Lemma \ref{lemmacompact},
this is bounded if and only if 
  \begin{equation} 
  \gamma^\mu B - \rho(B) \gamma^\mu = 0 , \quad \forall \mu = 1, ..., 2m .
    \label{eq:9}
  \end{equation}
Then by Lemma \ref{commutator-Dirac-matrices}, the algebra $\B$ is isomorphic either
  to
the algebra of block-diagonal matrices
\begin{equation}
\text{diag}\,(\lambda\, \I_{2^m}, \lambda'\, \I_{2^m}) , 
\label{eq:120}
\end{equation}
with $\rho$
the permutation of the two-blocks, or to a subalgebra of it. The first
case yields $\B \simeq \C^2$ resulting into an automorphism of $ \cinf\otimes \B$ $\rho$ 
as given in \eqref{eq:47}. The second case means $\B=\C $ with $\rho$ the trivial identity automorphism, excluded by hypothesis. 

To establish the last point of the proposition, it suffices 
to show that the operator $\widetilde\Gamma$ defined in \eqref{eq:203}
coincides with the grading $\gamma_{(2m)}$, possibly up to an
irrelevant global sign. From \eqref{eq:120} and
\eqref{eq:138} one indeed gets
\begin{equation}
  \label{eq:59}
  \pi_\B(\lambda_1,\lambda_2) = \pm \text{diag}\,(\lambda_1, \lambda_2)\otimes
  \I_{2^m} \qquad \forall (\lambda_1, \lambda_2)\in\C^2,
\end{equation}
hence $\widetilde\Gamma =\pm \gamma_{(2m)}$ as stated. 
\end{proof}

\subsection{Almost commutative geometries}
\label{sec-almostcomm}

For $\M$ a closed spin manifold of even dimension $2m$, the product of the canonical spectral triple \eqref{eq:1},
with grading $\gamma_{(2m)}$ and real structure $\mathcal J$,  by a finite dimensional
unital spectral triple
$(\A_F, \HH_F, D_F)$ is the spectral triple
\begin{equation}
  \label{eq:28}
  \A=\cinf \otimes \A_F, \quad\HH= L^2(\M,S)\otimes \HH_F,\quad D= \ds\otimes \I_F + \gamma_{(2m)}\otimes D_F
\end{equation}
where $\I_F$ is the identity on $\HH_F$,  and the representation
\begin{equation}
 \pi_0 = \pi_{\M}\otimes \pi_F
\label{eq:96}
 \end{equation}
of $\A$ on $\HH$ is the tensor
product of the multiplicative representations \eqref{eq:151} of $\cinf$ on spinors,
by the representation $\pi_F$ of $\A_F$ on $\HH_F$. In addition, when $(\A_F, \HH_F, D_F)$  has grading $\Gamma_F$ and real
structure $J_F$, then the product $(\A, \HH, D)$ is graded and real with
\begin{equation}
  \label{eq:71}
  \Gamma = \gamma_{(2m)}\otimes \Gamma_F, \quad J = \J\otimes J_F. 
\end{equation}

As for the canonical spectral triple for a manifold, there is no room for twisting the product spectral triple 
\eqref{eq:28} while keeping $\A$, $\HH$ and $D$ unchanged.
Indeed, if such a twist $\rho$ exists then by Lemma~\ref{lemmacompact} one
has that
$ f_\rho:=\pi_0(f\otimes m - \rho(f\otimes m))$ is compact for any $f\in C^\infty(\M)$ and 
 $m\in {\mathbb M}_n(\C)$. The same is true for
 \begin{equation}
\tilde f:= (\I\otimes e_{11}) \,f_\rho\,
 (\I\otimes  e_{11}) =: g\otimes e_{11}
\label{eq:10}
 \end{equation}
where $e_{11}=\text{diag} (1, 0, ..., 0)\in {\mathbb  M}_n(\C)$ and $g=\langle\I\otimes e_{11}, f_\rho(\I\otimes
 e_{11})\rangle\in C^\infty(\M)$.  The spectrum of
 $\tilde f$ is the range of $g$, and $\tilde f$ is never compact for
 the reasons explained below Lemma~\ref{lemmacompact}.

From now on we assume that $\A_F$ is a $C^*$-algebra, which is the
case in all the models of almost-commutative geometries applied to physics. The possibilities to minimally twist an almost commutative geometry are
a bit larger than the ones for manifolds, due to possible degeneracies of the representation of $\A_F$ on
$\HH_F$. Before proving this, let us begin with a lemma showing that the (minimal) twisting automorphism
$\rho$ actually acts only on the extra algebra $\B$. 

\begin{lemma}
\label{lemacg}
  Let   $(\A\otimes \B, \HH,\, D\,;\; \rho)$
 be a non-trivial minimal twist of the spectral triple 
 \eqref{eq:28}
by a finite dimensional $C^*$-algebra $\B$. Then
there exists $\rho'\in \Aut(\A_F\otimes\B)$ such that
\begin{equation}
  \label{eq:139}
  \rho(f\otimes T) =  f\otimes \rho'(T) \quad \forall f\in\cinf,\, T\in\A_F\otimes\B . 
\end{equation}
\end{lemma}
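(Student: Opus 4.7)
The strategy has two parts: first, to show that $\rho$ fixes the subalgebra $\A\otimes \I_\B = \cinf\otimes \A_F\otimes \I_\B$ pointwise; second, to exploit the resulting $\cinf$-linearity together with the twisted commutator condition to rule out any $\cinf$-dependence in $\rho(\I_\M \otimes T)$ for $T\in\A_F\otimes\B$.

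For the first part, apply Lemma~\ref{lemma:aut} to an arbitrary $a\in\A=\cinf\otimes\A_F$: the operator $\pi(a\otimes \I_\B - \rho(a\otimes \I_\B))$ is compact. The key structural fact is that $\cinf\otimes\I_{\A_F\otimes\B}$ is central in $\A\otimes\B$ and acts in $\pi$ as the multiplication algebra $\pi_\M(\cinf)\otimes \I_F$; consequently every operator in $\pi(\A\otimes\B)$ lies in the commutant of $\{M_f\otimes \I_F : f\in\cinf\}$, which is exactly the algebra of decomposable operators on the direct-integral decomposition
\[
\HH = L^2(\M,S)\otimes\HH_F \;\cong\; \int_\M^\oplus (S_x\otimes\HH_F)\,dx.
\]
Since $\M$ has positive dimension and $\HH_F$ is finite-dimensional, a decomposable operator on this Hilbert space is compact only if its fibres vanish almost everywhere: a non-trivial fibre on a set of positive measure allows one to build, via normalized bump functions supported on disjoint small sets, a weakly null sequence whose images are bounded below, contradicting compactness. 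Hence the compact decomposable operator $\pi(a\otimes \I_\B - \rho(a\otimes\I_\B))$ vanishes, and faithfulness of $\pi$ yields $\rho(a\otimes\I_\B) = a\otimes\I_\B$ for all $a\in\A$.

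In particular $\rho$ is the identity on the central subalgebra $\cinf\otimes \I_{\A_F\otimes\B}$, so $\rho$ is $\cinf$-linear, giving $\rho(f\otimes T) = (f\otimes\I_{\A_F\otimes\B})\cdot\rho(\I_\M\otimes T)$ for all $f\in\cinf$, $T\in\A_F\otimes\B$. The previous step already handles factors of the form $m\otimes\I_\B$ since $\I_\M\otimes m \in\A$, so by multiplicativity the remaining issue is to show $\rho(\I_\M\otimes\I_{\A_F}\otimes b)\in \I_\M\otimes(\A_F\otimes\B)$ for each $b\in\B$. This is the main obstacle: Lemma~\ref{lemma:aut} does not apply since in general $b\notin\A$. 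The plan is to use the boundedness of the twisted commutator $[D,\pi(\I_\M\otimes\I_{\A_F}\otimes b)]_\rho$ against the product Dirac operator $D = \ds\otimes \I_F + \gamma_{(2m)}\otimes D_F$: expanding $\rho(\I_\M\otimes\I_{\A_F}\otimes b)=\sum_i h_i\otimes T_i$ with linearly independent $h_i\in\cinf$ and $T_i\in\A_F\otimes\B$, the principal-symbol part coming from $\ds$ produces a term of the form $K D$ plus bounded corrections, where $K$ is a decomposable operator built from the $h_i$ and the representations $\pi(\I\otimes T_i)$. As in the proof of Lemma~\ref{lemmacompact}, the boundedness of the twisted commutator forces $K$ to be compact, hence zero by the decomposable-compact-is-zero argument used above. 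Faithfulness of $\pi$ then constrains the sum to collapse into $\I_\M\otimes T_b'$ with $T_b'\in\A_F\otimes\B$. Setting $\rho'(T):=T_b'$ and extending multiplicatively produces the required automorphism $\rho'\in\Aut(\A_F\otimes\B)$, whose $*$-automorphism properties are inherited directly from those of $\rho$.
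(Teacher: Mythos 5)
Your first step is sound and is essentially the paper's: Lemma~\ref{lemma:aut} gives that $\pi(a\otimes\I_\B-\rho(a\otimes\I_\B))\,(\ds\otimes\I_F)$ is bounded, and since this operator is a (matrix of) multiplication operator(s), it must vanish; your ``compact decomposable operator is zero'' argument is a legitimate way of making this precise and faithfulness then gives $\rho(a\otimes\I_\B)=a\otimes\I_\B$ for all $a\in\A$. The second step also follows the paper's idea (use the boundedness of the twisted commutator with $\ds\otimes\I_F$ to kill any $\M$-dependence in $\rho(\I_\M\otimes T)$), although the paper organizes it differently: it first invokes Kadison--Ringrose to view $\rho$ as an $\M$-indexed family of automorphisms of $\A_F\otimes\B$ and then proves this family is constant.

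The genuine gap is in your last deduction. Two points. First, the phrase ``as in the proof of Lemma~\ref{lemmacompact}'' does not apply to $a=\I_\A\otimes b$: that argument needs the untwisted commutator $[D,\pi(a)]$ to be bounded, which precisely fails for $\I_\A\otimes b$ (Remark~\ref{rem3.5}, Eq.~\eqref{eq:178}). What you can extract, as the paper does in \eqref{eq:191}--\eqref{eq:192}, is the vanishing of the principal symbol, i.e.\ the intertwining relation $(\gamma^\mu\otimes\I_F)\,\pi(\I_\A\otimes b)=\pi(\rho(\I_\A\otimes b))\,(\gamma^\mu\otimes\I_F)$ for all $\mu$. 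Second, and more seriously, this relation alone does not make ``the sum collapse'': it only expresses $\pi(\rho(\I_\A\otimes b))$ as a fibrewise chirality conjugate of $\pi(\I_\A\otimes b)$, and faithfulness of $\pi$ cannot convert that into the statement that the coefficients $h_i$ are constant unless you know that the operators $\pi(\I_\M\otimes T_i)$ (equivalently $\pi_\B$) act by matrices that are \emph{constant} over $\M$. A priori they are only decomposable operators (they merely commute with $\pi_0(\A)$), so non-constant $h_i$ combined with non-constant $\pi(\I_\M\otimes T_i)$ could still satisfy the intertwining relation; your linear-independence normalization of the $h_i$ does not rule this out. This constancy is exactly what the paper uses at \eqref{eq:129} (``acts on $\HH$ as a constant matrix''), and it, or a substitute such as the paper's Kadison--Ringrose step, is the missing ingredient in your argument; without it the step ``faithfulness of $\pi$ then constrains the sum to collapse into $\I_\M\otimes T_b'$'' is an assertion, not a proof.
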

\begin{proof}
By Lemma \ref{lemma:aut}, one has that 
\begin{multline*}
  \pi(a\otimes \I_\B - \rho(a\otimes \I_\B) )\,(\ds\otimes \I_F + \gamma^5\otimes D_F) \\ = \pi(a\otimes \I_\B - \rho(a\otimes \I_\B) )
  (\ds\otimes \I_F) +\pi(a\otimes \I_\B - \rho(a\otimes \I_\B) )(\gamma^5\otimes D_F)
\end{multline*}
is bounded for any $a$. The second term in the r.h.s. is always
bounded. 
On the other hand, the image of the faithful representation $\pi$ is made of finite matrices of multiplicative operators. Thus,
the first term is bounded if and
only if  $\pi(a\otimes \I_{\B}- \rho(a\otimes \I_\B))=0$, that is 
\begin{equation}
  \label{eq:148}
\rho(a\otimes \I_\B) = a\otimes \I_\B \quad \forall a\in\A.  
\end{equation}
Therefore $\rho$ preserve the center $\cinf$ of $\A\otimes \B$ and by \cite{KR12} 
is a function from $\M$ to inner automorphisms of the finite part algebra $\A_F\otimes \B$. 
We next show that for our case this function has to be a constant one. 
Let $k:=\text{dim}\, \HH_F$.  For any $T\in\A_F\otimes \B$,  the element $\I_\M\otimes T$ 
acts on $\HH$ as a constant $2^m k \times 2^m k$ matrix
\begin{equation}
  \label{eq:129}
  M :=\pi\,(\I_\M\otimes T)= \big\{ M_{jl} \big\}_{j, l = 1, ..., k}
\end{equation}
where each block  $M_{jl}$ is in $\IM_{2^m}(\C)$. 
On the other hand, if we write 
\begin{equation}
  \label{eq:1291}
\rho(\I_A\otimes T) = \sum\nolimits_j f^j \otimes T_j 
\end{equation}
for some  $ f^j \in\cinf$ and $T_j\in\A_F\otimes\B$, its representation 
$\pi(\rho(\I_A\otimes T))$ is a matrix $\widetilde{M}:=\big\{\widetilde{M}_{jl}\big\}$ 
where each block $\widetilde{M}_{jl}$ is a priori a function on $\M$, that is an element in 
$C^\infty(\M, M_{2^m}(\C))$.
The operator $\ds\otimes \I_F$ acts
as $\text{diag}\,(\ds, \ds, ..., \ds)$, so that
\begin{equation}
  \label{eq:191}
  (\ds\otimes\I_F)\,\pi(\I_\M\otimes T) - \pi(\rho(\I_\M\otimes T)  
  (\ds\otimes\I_F)= \left\{(\ds M_{jl}) +  (\gamma^\mu M_{jl} - \widetilde{M}_{jl} \gamma^\mu) \partial_\mu\right\}
\end{equation}
is bounded in and only if
\begin{equation}
  \label{eq:192}
  \gamma^\mu M_{jl} = \widetilde{M}_{jl} \gamma^\mu  \quad \forall j,l\in[1,k].
\end{equation}
This means that all the $\widetilde{M}_{jl}$'s are constant or, 
given the nature of the representation $\pi$, that all $f^j$'s in \eqref{eq:1291} are constant  
Therefore \eqref{eq:1291} reads
\begin{equation}
  \label{eq:193}
  \rho(\I_\M\otimes T)
= \I_\M\otimes \rho'(T)
\end{equation}
where the automorphism $\rho'\in \text{Aut}\, (\A_F\otimes \B)$ is
defined by
\begin{equation}
  \label{eq:194}
  \rho'(T):=\sum\nolimits_j  f^{j} T_j . 
\end{equation}
Using \eqref{eq:148} it is straightforward that 
$\rho(f\otimes T) =  \rho (f\otimes \I_{{\A_F}\otimes\B})\, \rho(\I_\M\otimes T) = f\otimes \rho'(T)$,
for all $f\in\cinf,\, T\in\A_F\otimes\B$, which proves the statement of the lemma.
\end{proof}

Since $\A_F$ is a $C^*$-algebra, it is a sum of matrix algebras,
\begin{equation}
  \label{eq:142}
  \A_F = \underset{i=1}{\overset{q}\bigoplus} \, \IM_{n_j}(\C)    \quad n_j\in\N^*.
\end{equation}
The representation $\pi_F$ is faithful, and we assume that each of the $\IM_{n_j}(\C)$ acts faithfully 
on $\HH_F$ as the direct sum of
$d_j$ copies of the fundamental representation. 
The dimension $k$ of $\HH_F$ is $k=\sum_j n_j d_j$ 
and we denote
\begin{equation}
d:= \min \left\{d_1, d_2, ..., d_q\right\}.\label{eq:145}
\end{equation}
   \begin{prop}
\label{prop-almost}
 Let   
$(\A\otimes \B,
  \HH,\, D\,;\; \rho)$
 be a non-trivial minimal twist of the almost commutative spectral triple 
 \eqref{eq:28} with $\A_F$ as above, and $\B$ a finite dimensional $C^*$-algebra such that
 $\pi_\B$ in \eqref{eq:104} is faithful. 
Then
\begin{equation}
\B=\C^l\otimes \C^2\quad \text{ for some} \quad l\in[1, d], 
\label{eq:146}
\end{equation}
with $d$ defined in \eqref{eq:145} and,  for all $(a_1, ..., b_1, ...) \in\A\otimes {\mathbb C}^2\otimes \C^l$ 
the automorphism is
 \begin{equation}
\rho(a_1, a_2, ..., a_l, b_1, b_2, ..., b_l) = (b_1, b_2, ..., b_l, a_1, a_2, ..., a_l) 
\label{eq:47acg}
  \end{equation}
\end{prop}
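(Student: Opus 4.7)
The plan is to adapt the argument of Proposition \ref{prop-dim-4} to the almost commutative setting, using Lemma \ref{lemacg} to localize the automorphism on $\A_F\otimes\B$ and establishing the $2m$-dimensional analog of Lemma \ref{commutator-Dirac-matrices}. By Lemma \ref{lemacg} I already have $\rho(f\otimes T)=f\otimes\rho'(T)$ for some $\rho'\in\Aut(\A_F\otimes\B)$, and equation \eqref{eq:148} of its proof forces $\rho'|_{\A_F\otimes\I_\B}=\id$. Writing $\pi(\I_\M\otimes T)$ and $\pi(\I_\M\otimes\rho'(T))$ in the $k\times k$ block form of \eqref{eq:129}, with blocks $M_{jl},\widetilde M_{jl}\in\IM_{2^m}(\C)$, the part of the twisted commutator coming from $\gamma_{(2m)}\otimes D_F$ is automatically bounded (all factors being constant finite-dimensional matrices), so the boundedness of $[D,\pi(\I_\M\otimes T)]_\rho$ reduces exactly as in \eqref{eq:192} to the blockwise constraint
\[
\gamma^\mu_{[2m]}\,M_{jl}=\widetilde M_{jl}\,\gamma^\mu_{[2m]},\qquad \mu=1,\ldots,2m,\quad j,l=1,\ldots,k.
\]

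The key technical step is then a $2m$-dimensional version of Lemma \ref{commutator-Dirac-matrices}: if $X,Y\in\IM_{2^m}(\C)$ satisfy $\gamma^\mu_{[2m]}X=Y\gamma^\mu_{[2m]}$ for all $\mu=1,\ldots,2m$, then $X=\lambda_-p_++\lambda_+p_-$ and $Y=\lambda_+p_++\lambda_-p_-$, with $p_\pm=\tfrac12(\I\pm\gamma_{(2m)})$ and $\lambda_\pm\in\C$. I would prove this by using $(\gamma^\mu)^2=\I$ to rewrite the hypothesis as $X=\gamma^\mu Y\gamma^\mu$ (no sum) for every $\mu$, and equating two such expressions for $\mu\ne\nu$ to obtain $[\gamma^\mu\gamma^\nu,Y]=0$; hence $Y$ centralizes the whole even Clifford subalgebra, whose commutant in the irreducible representation $\IM_{2^m}(\C)$ is $\C[\gamma_{(2m)}]=\C p_+\oplus\C p_-$, and the form of $X$ follows from $\gamma^\mu p_\pm=p_\mp\gamma^\mu$. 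Applied block-by-block this produces maps $N_\pm:\A_F\otimes\B\to\IM_k(\C)$ with
\[
\pi(\I_\M\otimes T)=p_+\otimes N_-(T)+p_-\otimes N_+(T),\qquad \pi(\I_\M\otimes\rho'(T))=p_+\otimes N_+(T)+p_-\otimes N_-(T).
\]

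Multiplicativity and faithfulness of $\pi$ force $N_\pm$ to be $*$-representations, $N_+=N_-\circ\rho'$ and $(\rho')^2=\id$; the condition $\pi(a\otimes\I_\B)=\pi_0(a)$ gives $N_\pm(a\otimes\I_\B)=\pi_F(a)$, so the restrictions $\nu_\pm(b):=N_\pm(\I_{\A_F}\otimes b)$ are representations of $\B$ landing inside the commutant $\pi_F(\A_F)'=\bigoplus_j\IM_{d_j}(\C)$, and faithfulness of $\pi_\B=p_+\otimes\nu_-+p_-\otimes\nu_+$ amounts to injectivity of $\nu_+\oplus\nu_-$.

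The main obstacle is the combinatorial identification of $(\B,\rho')$ from this data. Since $\rho'$ fixes $\A_F\otimes\I_\B$ pointwise, the central projections $e_j$ of $\A_F$ are fixed, so $\rho'$ preserves each summand $\IM_{n_j}(\C)\otimes\B$; being an involution of the simple factor $\IM_{n_j}(\C)\otimes\B$ that fixes $\IM_{n_j}(\C)\otimes\I_\B$ pointwise, it acts there as $\id\otimes\sigma^{(j)}$ for an inner involutive $\sigma^{(j)}\in\Aut(\B)$. Requiring the common intertwining $\nu_+=\nu_-\circ\sigma^{(j)}$ to hold for every $j$ forces the $\sigma^{(j)}$ to coincide with a single involution $\sigma\in\Aut(\B)$; decomposing $\B$ into the $\pm1$-eigenspaces of $\sigma$ and using the involutive swap of the $\nu_\pm$ then yields $\B\cong\mathcal B_0\otimes\C^2$ with $\sigma$ the flip of the two copies. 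Finally, the constraint that $\nu_+$ and $\nu_-$ be jointly faithful into the block-diagonal commutant $\bigoplus_j\IM_{d_j}(\C)$ with mutually commuting images forces $\mathcal B_0$ to be abelian, hence of the form $\C^l$, and the simultaneous embedding into every $\IM_{d_j}(\C)$ forces $l\leq\min_jd_j=d$, giving the claimed $\B=\C^l\otimes\C^2$ with $\rho'$ the flip of the two halves.
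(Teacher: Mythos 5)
The first two thirds of your argument follow the paper's route: Lemma \ref{lemacg} localizes $\rho$ to $\rho'\in\Aut(\A_F\otimes\B)$ fixing $\A_F\otimes\I_\B$, and the blockwise constraint $\gamma^\mu M_{jl}=\widetilde M_{jl}\gamma^\mu$ together with (a relabelled) Lemma \ref{commutator-Dirac-matrices} gives the chiral decomposition $\pi(\I_\M\otimes T)=p_+\otimes N_-(T)+p_-\otimes N_+(T)$ with $\rho'$ implementing the swap of $N_\pm$; your re-proof of the matrix lemma via the commutant of the even Clifford algebra is a fine (if unnecessary) alternative. The gap is in the final identification of $(\B,\rho')$, which is exactly the part the paper handles by pinning down $\pi_\B(\B)$ itself: commutation with $\pi_0(\I_\M\otimes\A_F)$ together with membership in the algebra $\mathfrak A$ of constant chirality-diagonal matrices gives the explicit form \eqref{eq:158}--\eqref{eq:161}, with a \emph{single} chirality pair $(\alpha,\beta)$ common to all blocks, whence $\pi_\B(\B)\subseteq\mathfrak B=\C^{d'}\otimes\C^2$ and $\rho'$ is the flip of the $\C^2$ factor; the bound $l\le d_j$ is then obtained by the counting argument \eqref{eq:143} (at most $2d_1$ independent copies of the fundamental of $\IM_{n_1}(\C)$ inside $\pi(\I_\M\otimes\IM_{n_1}(\C)\otimes\B)$), which uses faithfulness of $\pi$ on $\I_\M\otimes\A_F\otimes\B$, not merely faithfulness of $\pi_\B$.

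Your substitutes for these steps do not follow from the setup. (i) From $N_\pm\circ\rho'=N_\mp$ and $\rho'=\id\otimes\sigma^{(j)}$ on the summand $\IM_{n_j}(\C)\otimes\B$ you only get the blockwise relations $N_+(e_j\otimes b)=N_-(e_j\otimes\sigma^{(j)}(b))$; the global relations ``$\nu_+=\nu_-\circ\sigma^{(j)}$ for every $j$'' that you invoke to conclude that all $\sigma^{(j)}$ coincide are not implied, and the coincidence of the $\sigma^{(j)}$ is precisely part of what must be proved to obtain \eqref{eq:47acg}. (ii) Nothing in the construction forces $\nu_+(\B)$ and $\nu_-(\B)$ to have ``mutually commuting images'': they are tensored against the orthogonal projections $p_\mp$, so $\pi_\B$ is multiplicative and commutes with $\pi_0(\A)$ without any commutation between the two images; since this is the premise from which you deduce that $\B_0$ is abelian, the abelianness claim is unsupported --- indeed, at this stage the only constraint you have established is $\nu_\pm(\B)\subset\pi_F(\A_F)'=\bigoplus_j\IM_{d_j}(\C)$ with $\nu_+\oplus\nu_-$ injective, which by itself does not rule out noncommutative $\B_0$ when some $d_j\ge 2$. (iii) Likewise, joint faithfulness of $\nu_+\oplus\nu_-$ into $\bigoplus_j\IM_{d_j}(\C)$ does not require an embedding of $\C^l$ into \emph{each} $\IM_{d_j}(\C)$, so your derivation of $l\le\min_j d_j$ is not justified; the paper's route to $l\le d_j$ goes through the representation-multiplicity count \eqref{eq:143}. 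In short, the structural facts that the paper extracts from \eqref{eq:158}--\eqref{eq:161} (a single $T$ for all blocks, hence abelianness of the extra factor and a single global flip) are asserted rather than derived in your proposal, and this is the mathematical core of the proposition. A minor point: the automorphisms $\sigma^{(j)}$ of $\B$ obtained from the relative-commutant argument need not be inner (the flip of $\C^2$ is outer), though you never use innerness.
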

\begin{proof} The notations are those of Lemma \ref{lemacg}.
   By Lemma \ref{commutator-Dirac-matrices}, Equation \eqref{eq:192} implies
    that the matrices $M_{jl}$
    and $\widetilde{M}_{jl}$ are of the form
\begin{equation}
M_{jl} = \left(\begin{array}{cc} \alpha_{jl} & 0 \\ 0
    &\beta_{jl}\end{array}\right)\otimes \I_{2^{m-1}},\quad
\widetilde{M}_{jl} = \left(\begin{array}{cc} \beta_{jl} & 0 \\ 0
    &\alpha_{jl}\end{array}\right)\otimes \I_{2^{m-1}\,} , 
    \qquad \mbox{with} \quad \alpha_{jl}, \beta_{jl}\in\C .
\label{eq:131}
  \end{equation}
Let
\begin{equation}
\mathfrak A \simeq \IM_k(\C)\otimes \C^2\label{eq:141}
\end{equation}
be the $C^*$-algebra generated by all the matrices as in \eqref{eq:129}, with
blocks satisfying
\eqref{eq:131}.
 Since $\pi$ is faithful, $\A_F\otimes \B \simeq
 \I_\M\otimes(\A_F\otimes\B)$ is isomorphic to a subalgebra of
 $\mathfrak A$.

If $\A_F$ is a single matrix algebra, then $\A_F=\IM_k(\C)$ since
$\pi_F$ is faithful. By \eqref{eq:141}, one obtains $\B=\C^2$ with $\rho$ given by
\eqref{eq:47acg} and $l= 1$. This is the statement \eqref{eq:146} where $l=d_1=1$
and all the other $d_j$'s vanishing.

Otherwise, with $M_j^{(r)}$ denoting
the $r$-th copy in $\pi_F(\A_F)$ of the fundamental
representation of the matrix algebra $\IM_{n_j}(\C)$, one has
 \begin{equation}
   \label{eq:157}
 \pi_0(\I_\M\otimes \A_F) = \mbox{diag} \left(
     M_1^{(1)}, \dots, M_1^{(d_1)}, M_2^{(1)}, \dots, M_q^{(d_q)} \right)\otimes \I_{2^m}.
\end{equation}
For any $b\in\B$, the operator $\pi(\I_\A\otimes b)$ commutes with
 the operator $\pi(\A\otimes \I_\B)$ hence, by \eqref{eq:103}, with 
 $\pi(\I_\M\otimes \A_F \otimes \I_\B)=\pi_0(\I_\M\otimes \A_F)$. This means
\begin{equation}
\pi(\I_\A\otimes b) = \mbox{diag} \left(  
     \lambda_1^{(1)} \I_{n_1}, \dots, \lambda_1^{(d_1)} \I_{n_1}, 
     \lambda_2^{(1)} \I_{n_2}, \dots, \lambda_q^{(d_q)}\I_{n_q} 
\right)\otimes T
\label{eq:158}
\end{equation}
for $\big\{\lambda_j^{(t)}\big\}\in\C^{d'}$, with $d':=\sum_j d_j$,  and $T$
an arbitrary matrix in $\IM_{2^m}(\C)$.  But 
$\pi(\I_\A\otimes b)$ belonging to $\mathfrak A$ forces $T$ to be of
the form
 \begin{equation}
T = \left(\begin{array}{cc}\alpha &0 \\ 0&\beta\end{array}\right)\otimes \I_{2^{m-1}}\label{eq:161}
\end{equation}
for some $\alpha, \beta\in\C$. Hence $\B$ is isomorphic to a subalgebra of the algebra 
\begin{equation}
\mathfrak B = \C^{d'}\otimes \C^2
\label{eq:159}
\end{equation}
generated by all elements \eqref{eq:158} with \eqref{eq:161}.
The automorphism $\rho$ is defined as in Lemma
\ref{commutator-Dirac-matrices} by the permutation of $\alpha_{jl}$
and $\beta_{jl}$ in \eqref{eq:131}. Thus it acts only on the
$\C^2$ factor of $\mathfrak B$. Since $\rho$ is non trivial by
hypothesis, this forbids to
consider any subalgebra $\C^l\otimes \C$ of \eqref{eq:159}. Hence 
\begin{equation}
  \label{eq:91}
  \B \simeq \C^{l}\otimes \C^2 \quad \text{ for some } l\leq d'.
\end{equation}
Next, for any $b \in \B$, and $S\in\IM_{n_1}(\C)$ viewed as an element
 of $\A_F$, equations \eqref{eq:157}-\eqref{eq:161} lead to 
\begin{align}
  \label{eq:143}
  \pi(\I_\M\otimes S \otimes b) &= \pi_0(\I_\M\otimes S)\,
  \pi(\I_\A\otimes b) \nonumber \\ 
 & = \mbox{diag} \left(
    \left(\begin{array}{cc} M_1 &0 \\ 0&
                                       N_1\end{array}\right) , \dots, 
    \left(\begin{array}{cc} M_{d_1} &0 \\ 0&
                                       N_{d_1}\end{array}\right) , 0 , \dots, 0 \right) .
  \end{align}
Thus $\pi(\I_\M\otimes\A_F\otimes \B)$ contains at most $2d_1$ independent
representations of $\IM_{n_1}(\C)$. So if $l>d_1$, the
representation $\pi$ of $\I_\M\otimes \A_F\otimes (\C^l\otimes \C^2)$
is not faithfull, which is excluded by hypothesis. Therefore
\begin{equation}
  \label{eq:201}
  l\leq d_1.
\end{equation}
The same is true for all the $d_j$'s, hence the result that $d= \min \left\{d_1, d_2, ..., d_q\right\}$.
\end{proof}

Unlike the case of the canonical triple of a manifold, a minimal twist of an almost
commutative geometry is not necessarily by $\C^2$. However,  although the algebra
$\B=\C^l\otimes \C^2$ may be bigger than $\C^2$, the twisting
automorphism $\rho$ always results in permuting the two components of
spinors like in \eqref{flip}. Thus $\rho$ is an automorphism of the $\C^2$ factor of $\B$, which forms the ``irreducible'' part of the twist, in contrast with the $\C^l$ factor which reflects the 
reducibility of the representation $\pi_F$ of the finite dimensional algebra. 
By adding a condition of irreducibility for the finite part representation $\pi_F$ Proposition \ref{prop-almost}  yields the same unicity result as for manifolds.

\begin{cor}
\label{coracg}
 Let $(\A, \HH, D)$ be an almost commutative geometry 
 as in Proposition \ref{prop-almost}, such that the
 representation $\pi_F$ of $\A_F$ is irreducible. Then 
any  non-trivial minimal twist $(\A\otimes \B,
  \HH,\, D\,;\; \rho)$ is by $\B =\C^2$ with automorphism
  $\rho(a, a') = (a', a)$ for any $a,a'\in\A\otimes \C^2$.\end{cor}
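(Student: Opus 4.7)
The plan is to derive this as a direct specialization of Proposition \ref{prop-almost}. That proposition already tells us that any non-trivial minimal twist of an almost commutative geometry must be by $\B = \C^l \otimes \C^2$ for some $l \in [1,d]$, where $d = \min\{d_1, \dots, d_q\}$ counts the minimum multiplicity among the irreducible components of $\pi_F$ associated with the simple summands of $\A_F$. Thus the only work left is to show that the irreducibility hypothesis on $\pi_F$ forces $d = 1$, and consequently $l = 1$, so that $\B \simeq \C^2$.

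First I would invoke the standard structure theory of finite-dimensional $C^*$-algebras: since $\A_F$ is a finite-dimensional $C^*$-algebra admitting a faithful irreducible representation $\pi_F$, it must be simple (any nontrivial ideal would act trivially on a component, contradicting irreducibility of $\pi_F$), hence $\A_F \simeq \IM_{n_1}(\C)$ for some $n_1 \in \N^*$. In the notation of \eqref{eq:142} this means $q = 1$. Furthermore, irreducibility of $\pi_F$ forces the multiplicity to be $d_1 = 1$, since in the decomposition used before \eqref{eq:145} the representation $\pi_F$ is the direct sum of $d_1$ copies of the fundamental representation of $\IM_{n_1}(\C)$, and such a direct sum is irreducible only when $d_1 = 1$. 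Consequently $d = \min\{d_1\} = 1$.

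Applying Proposition \ref{prop-almost}, we therefore must have $l \in [1,1]$, i.e.\ $l = 1$, so that $\B = \C^1 \otimes \C^2 \simeq \C^2$. The automorphism $\rho$ described in \eqref{eq:47acg} then specializes, for $l = 1$, precisely to the flip $\rho(a, a') = (a', a)$ on $\A \otimes \C^2$, as claimed.

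I do not anticipate any substantive obstacle: the only mildly delicate step is invoking the correct fact that a faithful irreducible representation of a finite-dimensional $C^*$-algebra forces both simplicity and multiplicity one, but this is classical. Everything else is bookkeeping on top of Proposition \ref{prop-almost}.
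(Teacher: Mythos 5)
Your proof is correct and follows essentially the same route as the paper: the corollary is obtained by specializing Proposition \ref{prop-almost}, with irreducibility of $\pi_F$ forcing the multiplicities $d_j$ to equal $1$, hence $d=1$, $l=1$ and $\B\simeq\C^2$ with the flip automorphism \eqref{eq:47acg}. The only (harmless) difference is that you read irreducibility strictly, deducing in addition that $\A_F$ is a single matrix algebra ($q=1$), whereas the paper's one-line proof only invokes the weaker consequence that every $d_j=1$.
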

\begin{proof}
  This is Proposition \ref{prop-almost} with all the $d_i$'s equal to
  $1$, so that $l=1$ and $\B = \C^2\otimes \C = \C^2$.
\end{proof}

Nevertheless, there is still a degree of freedom in the representation
$\pi_\B$ of the auxiliary algebra $\C^2$, and thus in the grading operator $\widetilde\Gamma$ as defined in
  \eqref{eq:114}. 
 This freedom could lead to twisting of almost commutative geometries which are not of the `the twist by grading' type, in contrast to what happens for manifolds as shown in Proposition \ref{prop-dim-4}. 
Restricting to the irreducible case where all the
  $d_j$'s are equal to $1$, from \eqref{eq:158} one has:
\begin{equation}
  \label{eq:216}
\widetilde\Gamma:=\pi(\I_\A\otimes (1,-1))=   \bigoplus_{j=1}^q T_j 
\end{equation}
where each $T_j$ is one of the two possible representations of
$(1,-1)$ on $L^2(\M, S)$ allowed by \eqref{eq:16}, that is $T_j =
\gamma_{(2m)}$ or $-\gamma_{(2m)}$. In other terms, one has
\begin{equation}
  \label{eq:219}
  \widetilde\Gamma =\gamma_{(2m)}\otimes \widetilde\Gamma_F
\end{equation}
where $\widetilde\Gamma_F$ is a diagonal matrix with entries $\pm 1$. 
As stressed at the end of Sect.~\ref{sec-twistgrading}, 
the point is whether the operator $\widetilde\Gamma$ is a grading
of the twisted almost commutative geometry or not. If yes, the only 
minimal twist of any such a geometry by $\C^2$ would be
by grading as in Sect.~\ref{sec-twistgrading}; otherwise, there would
be alternative ways to minimally twist an almost
commutative geometry by $\C^2$, even in the irreducible case.
\begin{rem}
\textup{
When $q=1$, that is when $\A_F=M_n(\C)$, the operator $\widetilde\Gamma$ is either
  $\gamma_{(2m)}\otimes \I_F$ or $-\gamma_{(2m)}\otimes\I_F$. This it is
  not a grading of the almost commutative geometry since it does not
  anticommutes with $\gamma_{(2m)}\otimes D_F$. This reflects the fact
  that there is no grading for $\A_F=M_n(\C)$ acting
  irreducibly on $\HH_F$, for  the only operator that commutes with
  $\pi_F(\A_F)$ is the identity.  
}
\end{rem}

\noindent 
So far, we are able to answer this question in the real case, adding the assumption that $\widetilde\Gamma$ behaves well with respect to the real structure $J$, that is
\begin{equation}
  \label{eq:220}
  \widetilde\Gamma J = \tilde\epsilon \, J\Gamma \quad \text{ for some }
  \tilde\epsilon =1 \text{ or } -1.
\end{equation}
\begin{prop}
\label{coracg2}
 Let  
$(\A\otimes\C^2, \HH, D; \rho)$ with $\rho$ as in
\eqref{eq:18} be a minimal twist of an almost
 commutative geometry  $(\A, \HH, D)$ with $\A_F$ as in \eqref{eq:142}. Assume in addition
that the twisted spectral triple is real, with real structure
$J$. If \eqref{eq:220} holds true, then  
\begin{equation}
\widetilde\Gamma \, D + D \, \widetilde\Gamma=0,
\label{eq:208}
\end{equation}
meaning that $\widetilde\Gamma$ is a grading of both the starting and the twisted spectral triples.
\end{prop}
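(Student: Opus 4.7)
The strategy is to realise the anticommutator $\{D,\widetilde\Gamma\}$ as a twisted commutator $[D,c]_\rho$ for a well-chosen element $c\in\A\otimes\C^2$, and then to apply the twisted first-order condition \eqref{eq:39} to force it to vanish. The key observation is to take $c:=\I_\A\otimes(1,-1)$, so that by definition \eqref{eq:114} one has $\pi(c)=\widetilde\Gamma$, while $\rho(c)=\I_\A\otimes(-1,1)=-c$ because $\rho$ is the flip \eqref{eq:18}. Consequently
\begin{equation*}
[D,c]_\rho \;=\; D\widetilde\Gamma - (-\widetilde\Gamma)D \;=\; \{D,\widetilde\Gamma\}.
\end{equation*}
The Remark following Proposition \ref{proptwist} already guarantees that this operator is bounded; what we need is to upgrade boundedness to actual vanishing.

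To this end, apply the expanded form \eqref{eq:39-bis} of the twisted first-order condition to the pair $a=b=c$. Since $c$ is self-adjoint and $\rho(c)=-c$, one has $\pi(b^*)=\widetilde\Gamma$ and $\pi(\rho(b^*))=-\widetilde\Gamma$, so the condition reduces to
\begin{equation*}
\{D,\widetilde\Gamma\}\,J\widetilde\Gamma J^{-1} \;+\; J\widetilde\Gamma J^{-1}\,\{D,\widetilde\Gamma\} \;=\; 0.
\end{equation*}
Now invoke hypothesis \eqref{eq:220}, read as the statement that $\widetilde\Gamma$ commutes or anticommutes with $J$, so that $J\widetilde\Gamma J^{-1}=\tilde\epsilon\,\widetilde\Gamma$. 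Substituting this, using $\widetilde\Gamma^{\,2}=\I$, and expanding the two anticommutators yields $2\tilde\epsilon\bigl(D+\widetilde\Gamma D\widetilde\Gamma\bigr)=0$; multiplying the resulting identity $\widetilde\Gamma D\widetilde\Gamma=-D$ on the left by $\widetilde\Gamma$ gives precisely \eqref{eq:208}.

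The only non-routine step is recognising that $\I_\A\otimes(1,-1)$ is the correct probe in $\A\otimes\C^2$: the flip automorphism reverses its sign, so the associated twisted commutator automatically produces the anticommutator one wishes to annihilate. Once this trick is spotted, the remainder is a direct algebraic expansion. Notably, the argument uses neither the explicit block form \eqref{eq:219} of $\widetilde\Gamma$ nor any structural feature of the finite part $\A_F$; it relies solely on the twisted first-order condition and on the compatibility of $\widetilde\Gamma$ with the real structure $J$.
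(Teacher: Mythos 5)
Your argument is correct, and it takes a genuinely different route from the paper. You apply the twisted first-order condition \eqref{eq:39} directly to $a=b=c:=\I_\A\otimes(1,-1)$, exploiting that $\pi(c)=\widetilde\Gamma$, $c^*=c$, $\rho(c)=-c$, so that $[D,c]_\rho=\{D,\widetilde\Gamma\}$; then \eqref{eq:220} turns $J\pi(c^*)J^{-1}$ into $\tilde\epsilon\,\widetilde\Gamma$ and the condition collapses, via $\widetilde\Gamma^2=\I$, to $\widetilde\Gamma D\widetilde\Gamma=-D$, i.e.\ \eqref{eq:208}. The paper instead decomposes $D=\ds\otimes\I_F+\gamma_{(2m)}\otimes D_F$ and $\HH$ into $\widetilde\Gamma$-eigenspaces: the anticommutation of $\widetilde\Gamma$ with the free Dirac part is extracted from the boundedness of the twisted commutators (the relation $(\gamma^\mu\otimes\I_F)\widetilde\Gamma+\widetilde\Gamma(\gamma^\mu\otimes\I_F)=0$), while the finite part $\gamma_{(2m)}\otimes D_F$ is handled with the zero-order and twisted first-order conditions after decomposing $J$ in the $\widetilde\Gamma$-eigenbasis. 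Your proof is shorter and strictly more general: it never uses the almost-commutative product structure, the explicit form \eqref{eq:219} of $\widetilde\Gamma$, nor the zero-order condition — only unitality of $\A$, the flip, the first-order condition and \eqref{eq:220} — so it would establish the statement for any minimal twist by $\C^2$ with $\rho$ the flip. What the paper's route buys is the geometric mechanism (which piece of $D$ anticommutes for which reason), which is instructive but not needed for the conclusion. Two cosmetic remarks: the boundedness of $\{D,\widetilde\Gamma\}$ that you invoke is the Remark closing Sect.~\ref{sec-twistgrading} (after Proposition \ref{prop:twistreal}), not the one after Proposition \ref{proptwist}; and, as in the paper, the manipulation of the unbounded $D$ is formal — the identity coming from \eqref{eq:39} holds on the relevant dense domain and $\{D,\widetilde\Gamma\}$ is bounded, so the algebra goes through at the paper's own level of rigour.
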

\begin{proof}
We only sketch the proof that goes along the lines of the proofs of  
Propositions \ref{prop:twistreal} and \ref{prop:twistreal} since in a sense the present proposition 
goes in the inverse direction of those.
The key is to decompose $\HH=\tilde\HH_+ \oplus \tilde\HH_-$ 
into the eigenbasis of $\widetilde\Gamma$ and then all operators accordingly. 

Firstly,  the boundedness of the twisted commutator
$[D, \pi(a, \alpha)]_\rho$ for any $(a,\alpha)\in\A\otimes
\C^2$, restricts to requiring only the boundedness of 
\begin{equation}
\label{eq:198}
[-\ii \gamma^\mu\partial_\mu\otimes \I_F, \pi(a, \alpha)]_\rho 
\end{equation}
since, the twisted commutators of $\pi(a,\alpha)$ with $\gamma_{(2m)}\otimes D_F$ and 
$-\ii \gamma^\mu\omega_\mu\otimes\I_F$ are trivially bounded. 
That the twisted commutator in \eqref{eq:198} be bounded leads, with a direct computation to 
\begin{equation}
\label{eq:200}
\left( \gamma^\mu \otimes\I_F \right) \, \widetilde\Gamma + \widetilde \Gamma \left( \gamma^\mu \otimes\I_F \right)=0 
\qquad \forall
\mu=1, ..., 2m. 
\end{equation}
This shows that $\widetilde\Gamma$ anticommutes with
$\gamma^\mu\partial_\mu\otimes\I_F$, as well as with 
$\gamma^\mu\omega_\mu\otimes\I_F$, as can be seen using the local form
 of the spin connection $\omega_\mu = \tfrac 14 \Gamma^{\nu \rho}_{\mu} \gamma_\rho\gamma_\nu$. 
 Hence:
\begin{equation}
   \label{eq:209}
   \left( \ds \otimes\I_F \right) \, \widetilde\Gamma + \widetilde\Gamma \left( \ds \otimes\I_F \right)=0.
 \end{equation}
On the other hand, the condition on the finite part $\gamma_{(2m)}\otimes D_F$, that is 
\begin{equation}
  \label{eq:210}
\left(\gamma_{(2m)}\otimes D_F \right) \, \widetilde\Gamma + 
\widetilde\Gamma \left(\gamma_{(2m)}\otimes D_F \right) = 0 , 
\end{equation}
follows from the zero-order and the twisted first-order conditions. For this one uses again a decomposition of 
the operator $J$ on the eigenbasis of $\widetilde\Gamma$; this being possible once requiring \eqref{eq:220}.  
\end{proof} 

\begin{rem}
\textup{
Usually the notion ``almost commutative geometry" is intended for the product of
the algebra of functions on a manifold by a finite dimensional noncommutative algebra. 
More generally, it could be used for any spectral triple where the
algebra $\A$ has an infinite dimensional center $Z(\A)$, while
$\A\slash Z(\A)$ is finite dimensional. A well known example which goes beyond the
product of a manifold by matrices is the noncommutative torus 
$(\A_\theta, \HH_\theta, D_\theta)$ spectral triple for a rational deformation parameter. In this case the algebra 
$\A_\theta$ is the algebra of endomorphisms of a bundle over a commutative torus and the center of $\A_\theta$ can be identified with the algebra of functions on this (usual) torus.    
Many of the results of the previous section extend to this more general cases, thus leading to other interesting examples. Details shall be reported elsewhere.}
\end{rem}

\section{Applications}
\label{sec:applications}

A twisted spectral triple for the Standard Model of elementary
particles has been proposed in
\cite{buckley}, whose twisted fluctuations yield the extra-scalar
field $\sigma$ required to stabilize the electroweak vacuum as pointed out in
\cite{Chamseddine:2012fk}, together with an
unexpected additional vector field $X_\mu$. It has been shown in
\cite{Martinetti:2014aa} that for $\M$ a four dimensional manifold, the appearance of $X_\mu$
is not due to the peculiar structure of the Standard Model, but is
a consequence of the twist on the commutative part of the almost
commutative geometry. 
  We generalize this result to any even dimensional manifold in Sect.~\ref{subsec_twistedfree} below. 
Then we study in Sect.~\ref{subsec:SM} to what extend the twisted spectral triple of \cite{buckley}
enters in the framework of minimal twisting introduced in the present paper.

\subsection{Twisted fluctuations of the free Dirac operator}
\label{subsec_twistedfree}

Let us consider the minimal twist of a even dimensional closed
Riemannian manifold $\M$ as described in
Proposition \ref{prop-dim-4}, that is
\begin{equation}
(\cinf\otimes \C^2, L^2(M, S),\ds; \rho) \quad\text{ where
}\quad \rho(f,g)=(g,f) \quad \forall f,g\in\cinf,
\label{eq:53}
\end{equation}
with grading $\gamma_{(2m)}$ and real structure $J$ (the `charge
conjugation' operator).
 
For the algebra $C^\infty(\M)$, the representation of the opposite algebra induced by $J$
is just the representation $\pi_\M$ composed with the involution, that is 
\begin{equation}
  \label{eq:29}
  {J}\pi_\M(f) {J}^{-1}= \pi_\M(\bar f).
\end{equation}
A similar result holds for the minimal twist \eqref{eq:53}, but
depends on the $KO$-dimension.
\begin{lemma}
\label{lem:twistfluct}
  \begin{equation}
    \label{eq:67}
   {J}\pi(a) {J}^{-1}=
   \left\{\begin{array}{ll}
 \pi(a^*) &\text{if \,  KO-dim = 0, 4} \\ ~\\
           \pi(\rho(a^*))  &\text{if \, KO-dim = 2, 6}  
 \end{array}\right. .
 \end{equation}
\end{lemma}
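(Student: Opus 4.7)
The plan is to compute $J\pi(a)J^{-1}$ directly from the explicit form of the representation given in \eqref{eq:15}, keeping track of how $J$ interacts with the grading $\gamma_{(2m)}$. Writing $a=(f,g)\in\cinf\otimes\C^2$, one has $\pi(a)=p_+\pi_\M(f)+p_-\pi_\M(g)$ with $p_\pm=\tfrac12(\I\pm\gamma_{(2m)})$. Since $J$ is an antilinear isometry, $JAJ^{-1}\cdot JBJ^{-1}=J(AB)J^{-1}$, and for any scalar $\lambda\in\C$ viewed as $\lambda\I$ one has $J(\lambda\I)J^{-1}=\bar\lambda\,\I$. Hence $J$-conjugation distributes over the sum $p_++p_-$, and the only nontrivial information needed is $J\gamma_{(2m)}J^{-1}=\epsilon''\gamma_{(2m)}$.

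The first step is to record the consequence for the projections: $Jp_\pm J^{-1}=\tfrac12(\I\pm\epsilon''\gamma_{(2m)})$, which equals $p_\pm$ when $\epsilon''=+1$ and $p_\mp$ when $\epsilon''=-1$. Combined with formula \eqref{eq:29} (applied to each of the multiplicative operators $\pi_\M(f),\pi_\M(g)$), this yields
\begin{equation*}
J\pi(a)J^{-1}=Jp_+J^{-1}\,\pi_\M(\bar f)+Jp_-J^{-1}\,\pi_\M(\bar g).
\end{equation*}

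The second step is the case split dictated by the $KO$-dimension. For $KO$-dim $=0,4$ one has $\epsilon''=+1$, so $Jp_\pm J^{-1}=p_\pm$ and the right-hand side is $p_+\pi_\M(\bar f)+p_-\pi_\M(\bar g)=\pi(\bar f,\bar g)=\pi(a^*)$. For $KO$-dim $=2,6$ one has $\epsilon''=-1$, so $Jp_\pm J^{-1}=p_\mp$ and the right-hand side becomes $p_-\pi_\M(\bar f)+p_+\pi_\M(\bar g)=\pi(\bar g,\bar f)=\pi(\rho(\bar f,\bar g))=\pi(\rho(a^*))$, using $\rho(f,g)=(g,f)$ from \eqref{eq:53}. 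This exhausts the two cases in \eqref{eq:67}.

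No real obstacle is expected: the only subtlety is the tracking of the sign $\epsilon''$, which is exactly the data encoded in the third equation of \eqref{eq:34}, and the antilinearity of $J$ is harmless because the relevant scalar coefficients (the $\tfrac12$ in $p_\pm$) are real. The result also shows cleanly why the minimal twist is compatible with the real structure: in $KO$-dimensions $2,6$ the mismatch between $\pi(a)$ and $J\pi(a)J^{-1}$ is precisely absorbed by the twisting automorphism $\rho$.
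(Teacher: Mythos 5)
Your proof is correct and follows essentially the same route as the paper's: decompose $\pi(a)=p_+\pi_\M(f)+p_-\pi_\M(g)$, use that $J$ commutes with $\gamma_{(2m)}$ (hence with $p_\pm$) in $KO$-dimensions $0,4$ and anticommutes (swapping $p_\pm$) in $KO$-dimensions $2,6$, and conclude via $J\pi_\M(f)J^{-1}=\pi_\M(\bar f)$ from \eqref{eq:29}. The only cosmetic difference is that you phrase the sign bookkeeping through $\epsilon''$ and $Jp_\pm J^{-1}$ rather than directly through the (anti)commutation of $J$ with the grading, which is the same content.
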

\begin{proof}
The twisting automorphism in \eqref{eq:53} is such that $\rho^2=\id$.
Eq. \eqref{star-1}  then implies
\begin{equation}
\rho(a^*) = (\rho(a))^* . 
\label{eq:741}
\end{equation} 

For any $a= (f,g)\in\cinf\otimes\C^2$, Proposition \ref{prop-dim-4}
yields
\begin{equation}
  \label{eq:51}
  {J}\pi(a) {J}^{-1}= {J}\, p_+\pi_0(f) \,p_+\, {J}^{-1} + {J}\, p_-\pi_0(g)\, p_-\,{J}^{-1}
\end{equation} where $\pi_0=\pi_\M$ is the usual representation of $\cinf$ on
spinors and $p_{\pm}=\frac 12(\I\pm\gamma_{(2m)})$. 

If the KO-dimension is $0$ or $4$, the operator $J$ commutes with $\gamma_{(2m)}$,
hence with $p_+$ and $p_-$. Thus, using \eqref{eq:29}, 
\begin{align}
  \label{eq:73}
   {J}\pi(a) {J}^{-1} &= p_+ {J}\pi_0(f) {J}^{-1}
   p_+ + p_-{J}\pi_0(g) {J}^{-1} p_-   \nonumber \\
&= p_+ \pi_0(\bar f) +
   p_-\pi_0(\bar g) = \pi(\bar f, \bar g) = \pi(a^*).
\end{align}
In KO-dimension $2$ or $6$, the operator $J$ anticommutes with
$\gamma_{(2m)}$, meaning that ${J} p_+= p_- {J}$ and  ${J} p_-= p_+ {J}$. Hence, using now \eqref{eq:741}, 
\begin{align}
  \label{eq:7003}
   {J}\pi(a) {J}^{-1} &= p_- {J}\pi_0(f) {J}^{-1}
   p_- + p_+{J}\pi_0(g) {J}^{-1} p_+ \nonumber \\
&= p_- \pi_0(\bar f) +
   p_+\pi_0(\bar g) = \pi(\bar g, \bar f) = \pi(\rho(a^*)).
\end{align}
Thus the statement \eqref{eq:67}. 
\end{proof}

Now, if $\text{dim}\, \M = 2m$, any $a=(f, g)\in\cinf\otimes\C^2$, one has 
\begin{equation}
  \label{eq:6}
  \pi(a) = \left(\begin{array}{cc} f\I_{2^{m-1}}& 0 \\0 & g
      \I_{2^{m-1}}\end{array}\right), \qquad \pi(\rho(a)) = \left(\begin{array}{cc} g\I_{2^{m-1}}& 0 \\0 & f
      \I_{2^{m-1}}\end{array}\right).
\end{equation}
Using the fact 
that the spin connection commutes with the representation 
(and omitting the symbol of representation) a direct computation leads to  
\begin{align}
  \label{eq:40}
  [\ds,  a]_\rho & = -\ii \gamma^\mu[\partial_\mu, a] + 
  \big(\gamma^\mu \, a - \rho(a)\,\gamma^\mu\big) \, \ds \nonumber 
  \\ & = -\ii \gamma^\mu (\partial_\mu a),
\end{align}
since from Lemma~\ref{commutator-Dirac-matrices} for the particular automorphism $\rho$ in \eqref{eq:6} 
one has 
\begin{align}
\gamma^\mu a = \rho(a)\,\gamma^\mu .
\label{eq:22}
\end{align}
Using again this, any twisted $1$-form as defined in \eqref{eq:83} can thus be written as 
\begin{equation}
  \label{eq:45}
  A_\rho = -\ii \sum\nolimits_j a_j\,\gamma^\mu (\partial_\mu b_j) =: -\ii \gamma^\mu
  \sum\nolimits_j \rho(a_j) (\partial_\mu b_j)  \quad\quad \mbox{for} \quad a_j, b_j\in\cinf\otimes\C^2 . 
\end{equation}

\begin{lemma}\label{lemma5.2}
For the minimal twisted spectral triple in \eqref{eq:53} one has
  \begin{equation}
    \label{eq:451}
   {J} A_\rho {J}^{-1}=
   \left\{\begin{array}{ll}
 - \rho(A_\rho^*)  &\text{if KO-dim = 0, 4} \\
 ~\\
  - A_\rho^*  &\text{if KO-dim = 2, 6}  
 \end{array}\right. .
 \end{equation}
\end{lemma}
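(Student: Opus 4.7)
The strategy is to compute $J A_\rho J^{-1}$ directly from the expansion $A_\rho = \sum_j a_j [D,b_j]_\rho$, relying only on the abstract relation $JDJ^{-1} = \epsilon' D$ (with $\epsilon'= +1$ for every even $KO$-dimension) together with Lemma~\ref{lem:twistfluct} for the action of $J$ on algebra elements. The other two essential ingredients are the involution property $\rho^2=\id$ of the flip automorphism in \eqref{eq:53} and the commutation rule $\gamma^\mu c = \rho(c)\gamma^\mu$ for $c\in\A$, a direct consequence of Lemma~\ref{commutator-Dirac-matrices}.

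The first step is to compute
\[
 J[D,b_j]_\rho J^{-1} = \epsilon'\bigl( D\,(Jb_j J^{-1}) - (J\rho(b_j) J^{-1})\,D\bigr).
\]
In $KO$-dimension $0,4$, Lemma~\ref{lem:twistfluct} gives $Jb_j J^{-1}=b_j^*$ and $J\rho(b_j)J^{-1}=\rho(b_j^*)$, so the right-hand side equals $[D, b_j^*]_\rho$, which by \eqref{eq:40} is $-\ii\gamma^\mu \partial_\mu b_j^*$. In $KO$-dimension $2,6$, Lemma~\ref{lem:twistfluct} gives $Jb_j J^{-1}=\rho(b_j^*)$ and, using $\rho^2=\id$, $J\rho(b_j)J^{-1}=b_j^*$, hence the right-hand side equals $[D, \rho(b_j^*)]_\rho = -\ii\gamma^\mu \partial_\mu \rho(b_j^*)$. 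Combining with $Ja_j J^{-1} = a_j^*$ (resp.\ $\rho(a_j^*)$) one obtains
\[
 JA_\rho J^{-1} = \begin{cases} -\ii\sum_j a_j^*\,\gamma^\mu \partial_\mu b_j^* & \text{in KO-dim }0,4,\\[2pt] -\ii\sum_j \rho(a_j^*)\,\gamma^\mu \partial_\mu \rho(b_j^*) & \text{in KO-dim }2,6. \end{cases}
\]

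On the other hand, taking the adjoint of $A_\rho = -\ii\sum_j a_j \gamma^\mu \partial_\mu b_j$ and using $(\gamma^\mu)^*=\gamma^\mu$ together with $\rho(a)^*=\rho(a^*)$ (from \eqref{eq:741}) gives $A_\rho^* = \ii\sum_j (\partial_\mu b_j^*)\gamma^\mu a_j^*$. In $KO$-dimension $2,6$, moving a single $\gamma^\mu$ across the adjacent algebra element via $c\gamma^\mu = \gamma^\mu \rho(c)$ and using commutativity of $\cinf\otimes\C^2$ brings both $-A_\rho^*$ and $JA_\rho J^{-1}$ to the common form $-\ii\sum_j \gamma^\mu a_j^*\, \partial_\mu \rho(b_j^*)$. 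In $KO$-dimension $0,4$, interpreting $\rho(A_\rho^*)$ as the operator obtained by applying $\rho$ to every algebra-valued factor of $A_\rho^*$ (leaving the $\gamma^\mu$'s fixed), the same two moves reduce $-\rho(A_\rho^*)$ to $-\ii\sum_j \gamma^\mu \rho(a_j^*)\, \partial_\mu b_j^*$, which coincides with the expression for $JA_\rho J^{-1}$ above.

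The main obstacle is purely bookkeeping: each time $\gamma^\mu$ is moved past an element of $\A$, the rule $\gamma^\mu c = \rho(c)\gamma^\mu$ generates an extra $\rho$, and the calculation closes only because $\rho^2=\id$ neutralises the pair of $\rho$'s that arise. One must also fix the convention that $\rho$ on the operator $A_\rho^*$ acts on the algebra coefficients in its natural expansion, but this is the only reading consistent with the $KO$-dim $0,4$ statement.
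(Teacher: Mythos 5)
Your proposal is correct and takes essentially the same approach as the paper: it rests on Lemma \ref{lem:twistfluct}, the identities \eqref{eq:40} and \eqref{eq:22}, $\rho^2=\id$ together with \eqref{eq:741}, and the same "abuse of notation" reading of $\rho(A_\rho^*)$, concluding by comparison with $A_\rho^*$. The only minor difference is organizational: you conjugate each twisted commutator abstractly via $JDJ^{-1}=\epsilon' D$ with $\epsilon'=1$, whereas the paper conjugates the explicit expression $-\ii\gamma^\mu\partial_\mu$ using $J\gamma^\mu=-\gamma^\mu J$ and the antilinearity of $J$; the computations are equivalent.
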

\begin{proof}
In even dimensions the real structure $J$ commutes with the Dirac operator, $JD = DJ$ so that from the signs in 
\eqref{eq:34} one has $\epsilon'=1$. Being $J$ antilinear this means that 
\begin{equation}
  \label{eq:56}
  J\gamma^\mu = - \gamma^\mu J
\end{equation}
since usual gamma matrice algebra yields that $J$ commutes with the
covariant spin derivatives $\nabla_\mu$. 
By Lemma \ref{lem:twistfluct}, since $J$ is antilinear, it commutes with
$\partial_\mu$ and $\rho$ is a $*$-automorphism from \eqref{eq:741}, direct computations yields
\begin{equation}
  \label{eq:861}
  J A_\rho J^{-1} =\left\{\begin{array}{ll} 
  -\ii \gamma^\mu\sum_j \rho(a_j^*) (\partial_\mu b_j^*) & \text{ if KO-dim = 0,4} \\ ~\\ 
  -\ii \gamma^\mu \sum_j  a_j^* (\partial_\mu \rho(b_j^*)) & \text{ if KO-dim = 2,6 } 
\end{array}\right. . \end{equation}
On the other hand, using \eqref{eq:741} and \eqref{eq:22}, one computes:
\begin{equation}
\label{eq:24}
 A_\rho^* =  \ii \sum\nolimits_j (\partial_\mu b_j^*) \rho(a_j^*) \, \gamma^\mu = 
 \ii \gamma^\mu \sum\nolimits_j (\partial_\mu \rho(b_j^*)) a_j^* = 
 \ii \gamma^\mu \sum\nolimits_j a_j^* (\partial_\mu \rho(b_j^*)) , 
\end{equation}
since $(\partial_\mu \rho(b_j^*)) \in\cinf$ commutes with $a_j^* \in\cinf$. 
With a slight abuse of notation due to the omission of the symbol of representation, we denote the first line of  the r.h.s. of \eqref{eq:861} as $\rho(A_\rho^*)$. The results in \eqref{eq:451} follows by comparison.
\end{proof}

\begin{prop}
  \label{prop:fluctfree}
  There are no twisted fluctuations of the Dirac operator $\ds$ if the KO-dimension is $2$ or $6$. On the other hand, for KO-dimension $0$ or $4$, the twisted fluctuations are of the form
    \begin{equation}
      \label{eq:63}
      \ds_\rho = \ds - \ii \gamma^\mu \, f_\mu \gamma_{(2m)},
    \end{equation}
where $f_\mu = (f_1, \dots, f_{2m})$ are arbitrary real functions in $\cinf$. 
\end{prop}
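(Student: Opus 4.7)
The plan is to substitute the explicit parameterization of any $A_\rho \in \Omega_D^1$ into the fluctuation formula $D_{A_\rho} = \ds + A_\rho + J A_\rho J^{-1}$ (using $\epsilon' = 1$ in even dimensions), apply Lemma~\ref{lemma5.2} to rewrite $J A_\rho J^{-1}$, and then impose self-adjointness of $D_{A_\rho}$. By \eqref{eq:45}, any twisted $1$-form reads $A_\rho = -\ii \gamma^\mu H_\mu$ with $H_\mu := \sum_j \rho(a_j)(\partial_\mu b_j) \in \cinf\otimes \C^2$. Writing the two components as $H_\mu = (F_\mu, G_\mu)$ with $F_\mu, G_\mu \in \cinf$ complex, in the representation one has $H_\mu = F_\mu p_+ + G_\mu p_-$ with $p_\pm = \tfrac12(\I \pm \gamma_{(2m)})$. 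Using $\gamma^\mu a = \rho(a)\gamma^\mu$ from \eqref{eq:22}, $\rho^2 = \id$ and the $*$-rule \eqref{eq:741}, I would derive
\begin{equation}
A_\rho^* = \ii\gamma^\mu \rho(\bar H_\mu), \qquad \rho(A_\rho^*) = \ii\gamma^\mu \bar H_\mu.
\end{equation}

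For $KO$-dim $2$ or $6$, Lemma~\ref{lemma5.2} gives $D_{A_\rho} = \ds + A_\rho - A_\rho^*$. Since $A_\rho - A_\rho^*$ is automatically anti-self-adjoint while $\ds$ is self-adjoint, the self-adjointness requirement of Definition~\ref{def:twistfluct} forces $A_\rho = A_\rho^*$, hence $D_{A_\rho} = \ds$ and no non-trivial fluctuation is possible. For $KO$-dim $0$ or $4$, Lemma~\ref{lemma5.2} gives
\begin{equation}
D_{A_\rho} = \ds + A_\rho - \rho(A_\rho^*) = \ds - 2\ii \gamma^\mu \, \mathrm{Re}(H_\mu).
\end{equation}
I would then expand $\mathrm{Re}(H_\mu) = \alpha_\mu \I + \beta_\mu \gamma_{(2m)}$ with real $\alpha_\mu := \tfrac12(\mathrm{Re}\,F_\mu + \mathrm{Re}\,G_\mu)$ and $\beta_\mu := \tfrac12(\mathrm{Re}\,F_\mu - \mathrm{Re}\,G_\mu)$. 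Since $\gamma^\mu$ anticommutes with $\gamma_{(2m)}$, the term $-2\ii \alpha_\mu \gamma^\mu$ is anti-self-adjoint whereas $-2\ii \beta_\mu \gamma^\mu \gamma_{(2m)}$ is self-adjoint; self-adjointness of $D_{A_\rho}$ therefore enforces $\alpha_\mu = 0$, yielding $\ds_\rho = \ds - \ii \gamma^\mu f_\mu \gamma_{(2m)}$ with $f_\mu := 2\beta_\mu \in \cinf$ real.

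The final step is to verify that every real $f_\mu$ is realized: from $F_\mu = \sum_j g_j \partial_\mu f'_j$ and $G_\mu = \sum_j f_j \partial_\mu g'_j$ for $a_j = (f_j, g_j)$, $b_j = (f'_j, g'_j)$, a standard partition-of-unity construction on $\M$ using local coordinate functions achieves any desired $F_\mu = -G_\mu$ real and hence any real $\beta_\mu$. The main technical obstacle is the bookkeeping around the $\rho$-action under complex conjugation and the clean $p_\pm$ decomposition that splits the fluctuation into (anti-)self-adjoint pieces; once the identities for $A_\rho^*$ and $\rho(A_\rho^*)$ are established, the rest amounts to reading off coefficients.
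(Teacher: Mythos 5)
Your proposal is correct and follows essentially the same route as the paper: parameterize $A_\rho=-\ii\gamma^\mu H_\mu$ via \eqref{eq:45}, apply Lemma~\ref{lemma5.2} to get $\ds+A_\rho-A_\rho^*$ (KO-dim $2,6$) or $\ds+A_\rho-\rho(A_\rho^*)=\ds-2\ii\gamma^\mu\,\mathrm{Re}(H_\mu)$ (KO-dim $0,4$), and impose self-adjointness, your $\alpha_\mu\I+\beta_\mu\gamma_{(2m)}$ split being just a repackaging of the paper's condition $Y_\mu+Y_\mu^*=-\rho(Y_\mu+Y_\mu^*)$ leading to \eqref{eq:9201}. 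Your extra remark on realizing arbitrary real $f_\mu$ by a partition-of-unity construction is a welcome addition that the paper leaves implicit.
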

\begin{proof}
From Lemma~\ref{lemma5.2} one has 
\begin{equation}
   \label{eq:891}
   \ds_\rho = \ds + A_\rho + {J} A_\rho {J}^{-1} = \ds + A_\rho - 
    \left\{\begin{array}{ll}
 \rho(A_\rho^*)  &\text{if KO-dim = 0, 4} \\
 ~\\
  A_\rho^*  &\text{if KO-dim = 2, 6}  
\end{array}\right. . 
\end{equation}
By requiring that $\ds_\rho$ be self-adjoint one sees that for the
KO-dimension $2$ or $6$, the additional term $A_\rho-A_\rho^*$
equals its opposite, hence it vanishes.
For KO-dimension $0$ or $4$, let us write 
\begin{equation}
  \label{eq:68}
  Y_\mu := \sum\nolimits_j \rho(a_j)\, (\partial_\mu b_j), \qquad
  \rho(Y_\mu) :=\sum\nolimits_j a_j\, (\partial_\mu \rho(b_j)) 
\end{equation}
so that \eqref{eq:45} and \eqref{eq:24} yields
\begin{equation}
  \label{eq:72}
  A_\rho = -\ii \gamma^\mu Y_\mu,\quad A_\rho^* = \ii \gamma^\mu Y_\mu^*.
\end{equation}
Therefore $\ds$ is selfadjoint if and only if
 \begin{equation}
A_\rho - \rho(A_\rho^*) = -\ii \gamma^\mu(Y_\mu + Y_\mu^*) 
 \label{eq:70}
 \end{equation}
 is self-adjoint. By \eqref{eq:22} this is equivalent to
\begin{equation}
  \label{eq:243}
\gamma^\mu\left(\rho(Y_\mu +Y_\mu^*) + Y_\mu + Y_\mu^*\right)=0.
\end{equation}
With $a_j=(f_j, g_j)$ and  $b_j=(f'_j, g'_j)$ in $\cinf\otimes \C^2$, one has 
 \begin{equation}
Y_\mu  :=
\text{diag}\left(f_\mu \,\I_{2^{m-1}},\,  g_\mu \,\I_{2^{m-1}}\right),
\quad \rho(Y_\mu)  :=
\text{diag}\left(g_\mu \,\I_{2^{m-1}},\,  f_\mu \,\I_{2^{m-1}}\right),
\label{eq:9300}
\end{equation} 
where  
$f_\mu:=\sum\nolimits_j\, g_j \,\partial_\mu f'_j$ and $g_\mu:=\sum\nolimits_j\, f_j\, \partial_\mu g'_j$.
Both $Y_\mu + Y_\mu^*$ and $\rho(Y_\mu +Y_\mu^*)$ are block
diagonal matrices with block $\cinf$-proportional to $\I_{2^{m-1}}$, so  the
  l.h.s. of \eqref{eq:243} is block off-diagonal, with blocks $\cinf$-linear
  combinations of Pauli matrices. Hence \eqref{eq:9300} is equivalent
  to 
\begin{equation}
  \label{eq:48}
  Y_\mu +Y_\mu^* = -\rho(Y_\mu + Y_\mu^*).
\end{equation}
This means \begin{equation}
g_\mu + g_\mu^* = -(f_\mu + f_\mu^*)
\end{equation}
which is the same as 
\begin{equation}
  \label{eq:9201}
Y_\mu + Y_\mu^* = 2(\text{Re}\, f_\mu)\,\gamma_{(2m)} . 
\end{equation}
The latter is of the form in \eqref{eq:63}. This concludes the proof.
\end{proof}

In the non-twisted case, that is when $\rho$ the identity automorphism,
then \eqref{eq:891} shows that the fluctuations of $\ds$ also vanish
in $KO$-dimension $0, 4$.  This can also be read in \eqref{eq:48}, which for
$\rho=\text{Id}$ implies that $Y_\mu
+Y_\mu^*$ equals its opposite, hence is zero. One retrieves the well
known result that (non-twisted) fluctuations of the Dirac operator in
the commutative case always vanish. 
 
\subsection{On twisting the spectral Standard Model} 
\label{subsec:SM}
We investigate how the twisted spectral triple for the Standard Model of elementary
particles proposed in \cite{buckley}  fits the framework of
the present paper.

The (non-twisted) spectral triple of the Standard Model
\cite{Chamseddine:2007oz} is 
the almost commutative geometry
\begin{equation}
\A=\cinf\otimes\A_{sm},\quad\HH= L^2(\M, S)\otimes\HH_F,\quad   D=
\ds\otimes\I_F + \gamma_{(2m)}\otimes D_F 
\label{eq:240}
\end{equation}
where
\begin{equation}
  \label{eq:149}
  \A_{sm}:= \C\oplus {\mathbb H}\oplus \IM_3(\C) 
\end{equation} 
acts on the finite dimensional space $\HH_F$ whose dimension is the number of
elementary fermions. Then $D_F$ is a matrix acting on $\HH_F$ whose
coefficients encode the masses of these fermions. As in
\cite{buckley} we work with one generation only, so that $\HH_F\simeq \C^{32}$ splits as
\begin{equation}
  \label{eq:237}
  \HH_F =  \HH_L \oplus  \HH_R \oplus \HH_L^a \oplus \HH_R^a
\end{equation}
with each of the summands isomorphic to $\C^8$
($8$ is for one pair of colored quarks and one pair electron/neutrino). The index $L/R$ is for
left/right particles, and the exponent $a$ is for antiparticles.
The (real) algebra of quaternion acts only on $\HH_L$, the algebra $\IM_3(\C)$ only on
$\HH_L^a \oplus \HH_R^a$ and $\C$ on  $\HH_R \oplus \HH_L^a \oplus
\HH_R^a $, namely for $c\in\C, q\in\mathbb H$ and $m\in\IM_3(\C)$ one has
\begin{equation}
  \label{eq:140}
  \pi_F(c, q, m) =  \pi_L(q) \oplus \pi_R(c) \oplus \pi_L^a(c, m) \oplus
  \pi_R^a(c, m).
  \end{equation}
Explicitly, identifying a quaternion $q$ with its usual representation as
$2\times 2$ complex matrix, one has 
\begin{align}
  \label{eq:238}
 & \pi_L(q) :=
  q\otimes \I_4, \quad  \pi_R(c) := \text{diag} (c, \bar c)\otimes \I_4, \nonumber \\[4pt] 
 & \pi^c_L(c,m)= \pi^c_R(c,m) := \I_2\otimes \text{diag}  (c,m).
\end{align}
The identity $\I_4$ in the particle sector means that $\C$ and $\mathbb H$
preserve the color, and do not mix leptons with quarks. The identity $\I_2$ in the antiparticle sector means
that $\C$ and $\IM_3(\C)$ preserves the flavour: $c$ acts by multiplication on
antileptons while $\IM_3(\C)$ mixes the color of the antiquarks. The
representation of $\A$ on $\HH$ is thus
\begin{equation}
  \label{eq:239}
\pi_0(f\otimes a_F) = \pi_\M(f)\otimes \pi_F(a_F) \quad \forall
f\in\cinf, \, a_F\in\A_{sm}.   
\end{equation}

A twisted spectral triple $(\widetilde\A, \HH, D; \rho)$  of the
Standard Model has been 
obtained in \cite{buckley} following an idea introduced in \cite{Devastato:2013fk}.
One lets the algebra $\C\oplus{\mathbb H}$ act independently on the
left/right components of spinors, only in the particle sector $\HH_L \oplus  \HH_R $ , that is $\C\oplus{\mathbb H}$ is doubled but 
$\IM_3(\C)$ is not. 
Explicitly one takes $\widetilde\A = \cinf \otimes \widetilde\A_{sm}$ where 
\begin{equation}
  \label{eq:202}
  \widetilde\A_{sm}: =\C\oplus \C \oplus {\mathbb H}\oplus {\mathbb H}
  \oplus \IM_3(\C), 
\end{equation}
This partial doubling can be easily dealt with by extending our Definition~\ref{deftwist} of a minimal twist. 
\begin{df}
\label{def:partial}
Let $(\A, \HH, D)$  be a spectral triple whose algebra
\begin{equation}
  \label{eq:69}
  \A=\A' \oplus\A''
\end{equation}
is the direct sum of two (pre-) $C^*$ algebras $\A'$ and $\A''$. 
A generalised minimal twist of $(\A, \HH, D)$ by the algebra $\B$ 
  is a twisted spectral triple
  \begin{equation}
((\A'\otimes\B) \oplus \A'',\HH,D; \rho)
\label{eq:234}
\end{equation}
such  that the initial representation $\pi_0$ of $\A'\oplus\A''$ on
$\HH$ is
retrieved from the representation $\pi$ of the algebra $(\A'\otimes\B)\oplus\A''$ as
  \begin{equation}
    \label{eq:150}
    \pi_0(a'\oplus a'') = \pi \big((a'\otimes\I_\B )\oplus a'' \big) \quad \forall
    a'\in\A',\, a''\in\A'',\, b\in\B.
 \end{equation}
\end{df}
\noindent We could have taken from the very beginning this more general definition rather than the one in Definition~\ref{deftwist}. We have decided not to do so, since 
this would have only made the paper rather cumbersome and heavier to read while not adding much to its scientific content. 

In the case of the twisted spectral triple for the Standard Model, by setting $\A'=\cinf\otimes \C\oplus {\mathbb H}$ and $\A''=\cinf\otimes\IM_3(\C)$ so that
\begin{equation}
\cinf\otimes\A_{sm}= \A' \oplus \A'',
\label{eq:79}
\end{equation}
one gets as expected
\begin{equation}
  \label{eq:245}
  \cinf\otimes \widetilde\A_{sm} =(\A'\otimes \B) \oplus \A''
\end{equation}
with twisting algebra $\B =\R^2$ --- one cannot consider $\B=\C^2$, for $\mathbb H$
is not a complex algebra.  The representations $\pi$ of $\cinf\otimes\widetilde\A_{sm}$ that, together with the initial representation $\pi_0$ in \eqref{eq:239} satisfies \eqref{eq:150}, is given by
\begin{multline}
  \label{eq:236}
  \pi(f\otimes A) :=  
 \left(p_+ \pi_\M(f) \right) \otimes \Big(\pi_L(q^r) + \pi_R(c^r)\Big) 
 \;+\; 
 \left(p_- \pi_\M(f) \right) \otimes \Big( \pi_L(q^l) + \pi_R(c^l) \Big) \\[4pt]  + 
\pi_M(f)\otimes \Big(\pi_L^a(c^r, m) + \pi_R^a(c^r, m)\Big) . \qquad
\end{multline}
Here $p_\pm:= \tfrac 12 (\I_\M\pm \gamma_{(2m)})$ and the 
generic element of the algebra $\widetilde\A_{sm}$ in \eqref{eq:202} is 
\begin{equation}\label{algw}
  A= (c^r, c^l, q^r, q^l,m) \quad \text{ with }\quad (c^r, c^l)\in\C^2, \;
  (q^r, q^l)\in{\mathbb H}^2, \; m\in \IM_3(\C) .
\end{equation}
 
In contrast with the construction of the present paper, 
the automorphism $\rho$ of the twisted spectral triple of \cite{buckley} is 
an automorphism of the represented algebra $\pi(\widetilde\A_{sm})$ rather than  
$\widetilde\A_{sm}$ itself. With the notation \eqref{algw}, 
this automorphism exchanges $(q^l, c^l)$ with $(q^r, c^r)$ in the particle sector, 
while leaving unchanged the $c^r$ in the anti-particle sector. Explicitly, 
\begin{multline}
  \rho\left(\pi(f\otimes A)\right) =  
\left(p_+ \pi_\M(f)\right) \otimes \Big(\pi_L(q^l) +\pi_R(c^l)\Big) \;+\; 
p_- \pi_\M(f)\otimes \Big( \pi_L(q^r) + \pi_R(c^r)\Big)\\[4pt]  + 
\pi_M(f)\otimes \Big(\pi_L^a(c^r, m) + \pi_R^a(c^r, m)\Big) . \qquad
\label{eq:244}
  \end{multline}

Additional investigation on this point will be reported elsewhere. 
One
option is to generalise the results of the present paper to automorphisms that do not
commute with the representation, so as to fit the twisted spectral
triple of \cite{buckley} in the scheme. A second possibility is to 
minimally twist the Standard Model in the sense of 
Definition \ref{deftwist} or Definition \ref{def:partial}, and see whether twisted
fluctuations still generate the extra-scalar field $\sigma$ needed for the model, or even more general fields.

That the twisted spectral triple of \cite{buckley} does not completely fit our main definition is a sign that there could be  more general models for twisted spectral triples for the Standard Moled of particle physics, 
leading hopefully to richer phenomenological consequences.


\end{document}